\newtheorem{assumption}{\textbf{Assumption}}
\newcommand{\RNum}[1]{\uppercase\expandafter{\romannumeral #1\relax}}
\newcommand{\CR}[1]{{\color{black}{#1}}}
\newcommand{\WZ}[1]{{\color{black}{#1}}}
\newcommand{\DI}[1]{{\color{black}{#1}}}
\let\classAND\AND
\let\AND\relax
\let\AND\classAND
\newtheorem{lemma}{\textbf{Lemma}}
\newtheorem{remark}{\textbf{Remark}}
\newtheorem{theorem}{\textbf{Theorem}}
\begin{document}

\begin{frontmatter}

\title{Resilient Average Consensus: A Detection and Compensation Approach\thanksref{footnoteinfo}} 

\thanks[footnoteinfo]{This paper was presented in part at 60th IEEE Conference on Decision and Control, Austin, Texas, 2021 \cite{zhengaccurate}.}

\author[Shanghai]{Wenzhe Zheng}\ead{wzzheng@sjtu.edu.cn},    
\author[Shanghai]{Zhiyu He}\ead{hzy970920@sjtu.edu.cn},               
\author[Shanghai]{Jianping He}\ead{jphe@sjtu.edu.cn},  
\author[Hangzhou]{Chengcheng Zhao}\ead{chengchengzhao@zju.edu.cn},
\author[Shanghai]{Chongrong Fang}\ead{crfang@sjtu.edu.cn}
\address[Shanghai]{Department of Automation, Shanghai Jiao Tong University, and Key Laboratory of System Control and Information Processing, Ministry of Education of China}  
\address[Hangzhou]{State Key Laboratory of Industrial Control Technology, Zhejiang University}

\begin{keyword}                           
Resilient consensus, Malicious attacks, Fault-tolerance, Detection, Compensation.              
\end{keyword}                             

\begin{abstract}
We study the problem of resilient average consensus for multi-agent systems with misbehaving nodes.
\WZ{To protect consensus value from being \CR{influenced} by misbehaving nodes, we address this problem by detecting misbehaviors, mitigating \CR{the corresponding adverse} impact and achieving \CR{the resilient} average consensus.}
\CR{In this paper, general} types of misbehaviors are considered, including deception attacks, accidental faults and link failures. 
We characterize \CR{the adverse impact} of misbehaving nodes in a distributed manner via two-hop communication information and 
develop a \underline{d}eterministic \underline{d}etection-\underline{c}ompensation-based \underline{c}onsensus (D-DCC) algorithm with a decaying fault-tolerant error bound. Considering scenarios where information sets are intermittently available \WZ{due to link failures}, a stochastic extension named \underline{s}tochastic \underline{d}etection-\underline{c}ompensation-based \underline{c}onsensus (S-DCC) algorithm is proposed.
We prove that D-DCC and S-DCC allow nodes to asymptotically achieve \WZ{resilient} average consensus exactly and in expectation, respectively. 
Then, the Wasserstein distance is introduced to analyze the accuracy of S-DCC\@.  
Finally, extensive simulations are conducted to verify the effectiveness of the proposed algorithms.
\end{abstract}

\end{frontmatter}




\section{INTRODUCTION}
Consensus problems have attracted extensive interests due to their wide applications, e.g., distributed control \cite{desai2001modeling}, estimation \cite{6172233} and optimization in robotic networks \cite{wang2014distributed}, smart grids \cite{8279503} and wireless sensor networks\cite{he2013sats}.
The goal of consensus is to enable nodes to reach global agreements via local exchanges of information by predefined consensus protocols. 
Under this framework, many interesting topics have been studied, including average consensus \cite{olfati2004consensus}, consensus with noises \cite{8440764} or packet drops \cite{hadjicostis2015robust}, and the convergence rates \cite{060678324}. 

Most of the aforementioned \WZ{studies} assume that all nodes faithfully execute predefined protocols.
Since distributed systems are usually deployed in open environments, there may exist vulnerabilities prone to failures or attacks\WZ{~\cite{dibaji2019systems}}, which could affect the consensus process. 
Hence, it is essential to constrain \WZ{the} negative impact \CR{brought by malicious or faulty nodes} to ensure desired agreements.
Motivated by this issue, numerous efforts have been devoted to resilient consensus \cite{6481629,Pasqualetti201290}, which are mainly divided into the following two categories.
The first category is based on Mean-Subsequence Reduced (MSR) algorithms \WZ{\cite{kieckhafer1993low,262588}}.
\WZ{The main idea of these studies is that the node discards the extreme states from neighbors and update its state with the remaining ones.}
An extended version named Weighted-Mean-Subsequence Reduced (W-MSR) \WZ{algorithm} is developed in~\cite{6315661}. \WZ{Instead of removing all extreme states as in the MSR algorithm, nodes only discard the extreme states that are strictly larger or smaller than their own states with W-MSR algorithm.} 
Based on the W-MSR algorithm, LeBlanc {\em et~al.}~\cite{6481629} develop a novel graph-theoretic property termed network robustness to characterize the resilience of W-MSR algorithms and derive the sufficient and necessary conditions of resilient asymptotic consensus. 
\WZ{In addition}, the quantized version of the W-MSR algorithm handling  asynchronous and time-varying time delays is presented in \cite{8100900}.
In faulty asynchronous networks, an approximate Byzantine consensus algorithm based on MSR algorithms is proposed in \cite{haseltalab2015approximate}. 
In applications, the MSR algorithm can be used to build resilient robot teams. \CR{For instance,} Saulnier {\em et~al.}~\cite{7822915} present a hybrid algorithm that enables resilient formation control for mobile robot teams in the presence of noncooperative robots. 
Guerrero {\em et~al.}~\cite{guerrero2017formations} propose algorithms to build robust robot teams to ensure the effectiveness of the MSR algorithm.
Similar to MSR algorithm, Yan {\em et~al.}~\cite{yan2020resilient} solve the resilient multi-dimensional consensus problem by a middle-point-based algorithm where these points are convex combinations at one dimension each time.  
The above algorithms ensure the consensus among normal nodes with no need to detect malicious nodes and do not require much computation cost. 
However, most of them put forward strict requirements on the graph structure, e.g., ($F+1,F+1$)-robust for $F$-total malicious model and ($2F+1$)-robust for $F$-local malicious model \cite{6481629,8100900}, \CR{which may be hard to satisfy in practice}.

The other category is detection and isolation.
The main idea  is  designing detection algorithms to find abnormal nodes and then isolate them immediately.
Observer-based techniques  are effective to achieve detections.
Pasqualetti {\em et~al.}~\cite{Pasqualetti201290} solve the fault detection and isolation problem based on observations for synchronous consensus in directed networks.
To solve problems with more general attack models, mobile agents  are exploited as observers  \cite{8013830}. This approach relaxes the requirements that the tolerable amount of attacks is strictly limited by the network connectivity and malicious nodes could not be neighbors.
Fault detection is also achieved by observer-based techniques for interconnected second-order systems \cite{SHAMES20112757}.
Detection method proposed in~\cite{sun2021security} is able to distinguish privacy noises and attacks with bounded false rate.
The method is to realize both privacy preserving and security under malicious attacks.
The detection and mitigation in randomized gossiping algorithms based on observation of temporal or spatial difference are investigated in systems of data injection attacks \cite{gentz2016data}.
Silvestre {\em et~al.}~\cite{silvestre2017stochastic} propose detection algorithms for randomized gossip algorithms by the use of Set-Valued Observers and Stochastic Set-Valued Observers. This method only requires a finite number of vertices to represent poly-topic sets and it can reduce the computational cost.
Observer-based techniques do not require additional state information of the system, but usually require a high computational cost.
Multiple communication information, e.g., two-hop information, also contribute to detection.
In~\cite{guo2012distributed}, the authors combine two communication protocols, i.e., communication-based and sensing-based model, for fault detection.
In~\cite{ramos2020general}, consensus algorithms are enabled in multiple distinct subsets of nodes, and each normal node verifies whether its state vector has different values in distinct subsets. If not, there will be malicious nodes among the subsets which have the same values.
Based on the majority voting, a detection scheme is designed with two-hop communication information for which the constraint on graph structures is less stringent than that for MSR algorithms \cite{yuan2021secure}.
Multiple communication information utilize extra state information to form redundancy relationships to determine whether the system is under attack.

The above researches present effective methods to achieve consensus among normal nodes. 
Due to the considerations of communication and computation costs, most of them cannot guarantee exact convergence to the average of initial states of nodes.
Specifically, MSR-based algorithms only ensure consensus within the range or convex hull of normal nodes' initial states, and detection-isolation-based algorithms do not eliminate \WZ{adverse impacts} caused by malicious nodes before isolation. 
Hence, the information of initial states may be polluted. Also, the latter category of algorithms might mistake faulty nodes for malicious ones when  link failures or miscalculations occur accidentally and will not adopt attack recovery methods, resulting in loss of information and system capacity. 
The previous work of resilient average consensus takes unreliable  heterogeneous communication links into consideration, but \CR{does not consider misbehaving nodes which include malicious and faulty ones}\cite{6426666}.

\CR{Therefore, in this paper, we are motivated to design resilient average consensus algorithms for multi-agent systems to defend against the adverse impact on the consensus that are brought by misbehaving nodes. 
To achieve better performance of resilient average consensus, we adopt the idea of detecting and compensating the adverse impact of misbehaving nodes with two-hop communication information and providing tolerance for faulty nodes.
We designed detection methods, estimate and compensation the errors in a distributed manner even if the two-hop information is unreliable when malicious nodes are not neighbors in our previous work \cite{zhengaccurate}. In this paper, we make further investigation and the main contributions of this paper are summarized as follows:
}

\begin{itemize}
\item We study the problem of resilient average consensus with misbehaving nodes. Not only detection and isolation, but also estimation and compensation are adopted to achieve accurate \WZ{resilient} average consensus.
\item By utilizing two-hop communication information, we design a \underline{d}eterministic \underline{d}etection-\underline{c}ompensation-based \underline{c}onsensus (D-DCC) algorithm for normal nodes to detect and compensate the adverse impact from misbehaving nodes. We also prove that \WZ{resilient} average consensus can be achieved by D-DCC exactly.
\item We further consider the scenario where the communication links could fail due to accidents, which brings extra challenges for the accurate resilient average consensus problem. In this case, we propose a \underline{s}tochastic \underline{d}etection-\underline{c}ompensation-based \underline{c}onsensus (S-DCC) algorithm correspondingly. Compared with our previous work\cite{zhengaccurate}, in this paper, we relax the requirement that the expectation of faults should be zero in S-DCC, and prove that \WZ{resilient} average consensus in expectation is achieved by S-DCC.
Moreover, we analyze the accuracy of S-DCC by Wasserstein distance, and present extensive evaluation results to show the effectiveness of the proposed algorithms.
\end{itemize}


The rest of this paper is organized as follows. 
Section 2 introduces the models of networks and \WZ{misbehaviors}. 
Section  3 presents the proposed algorithms consisting of the mechanisms of detection and compensation.
Then, the performance of the algorithms is analyzed in Section 4. 
In Section 5, simulation results are provided. 
Finally, Section 6 concludes the paper and discusses the future directions.

\WZ{\emph{Notation:}}
\WZ{We denote by $\mathbb{R}$ the set of reals.
Given a matrix $M$, $M^\top$ is its transpose.
Given a set $\mathcal{V}$, $|\mathcal{V}|$ is the number of elements in the set and $\mathcal{V} / \{ i \}$ is the set removing $i$ from $\mathcal{V}$.
Given a matrix $W$, $W_{ \{ i\}}$ is the matrix deleting $i$-th row and column of $W$.
We let $\mathbb{E}$ and $\mathbb{D}$ denote expectation and variance operations, respectively.

}

\section{PRELIMINARIES AND PROBLEM FORMULATION}
\subsection{Network Model}
Consider a network modeled as an undirected graph $\mathcal{G}=(\mathcal{V},\mathcal{E})$ with vertex set $\mathcal{V}=\{1,2,\WZ{\ldots} ,N\}$ and edge set ${\mathcal{E} \subseteq \mathcal{V} \times \mathcal{V}}$. Note that $(i,j) \in \mathcal{E} $ indicates \WZ{that} node $i$ and node $j$ can communicate with each other. The neighbor set of node $i$ is denoted by $\mathcal{N}_i=\{ j|(i,j) \in \mathcal{E}\}$. 
The adjacency matrix is $A_\mathcal{G}=[a_{ij}]_{N\times N}$, and Laplacian matrix is $L=D_\mathcal{G}-A_\mathcal{G} $, where $ D_\mathcal{G} = \mathrm{diag} (d_1,\ldots,d_N)$ with $d_i = \sum_{j = 1}^{N}a_{ij}$.
Let $d_m=\max\{d_1,\dots,d_N\}$.
\WZ{Without loss of generality, let the subsets} $ \mathcal{V}_s=\{1,2, \cdots ,n \}$ and $ \mathcal{V}_m=\{ n+1, n+2,\cdots, N \}$ represent normal nodes set and misbehaving nodes set, respectively. 
It follows that $\mathcal{V}_s \cup \mathcal{V}_m=\mathcal{V}$ and $\mathcal{V}_s \cap \mathcal{V}_m=\emptyset$.

\subsection{Consensus Algorithms}
Let $x_i(k) \in \mathbb{R}$ be the state of node $i$ at time $k$ and ${x}(k)=[x_1(k), x_2(k),\WZ{\ldots},x_N(k)]^\top$ be the \WZ{state vector}.
Consensus algorithms are distributed control protocols that drive all states to the same value, \WZ{i.e.,} $x_i=c,$ $\forall i \in \mathcal{V}$ \cite{olfati2004consensus}. 
The value $c$ is called \WZ{the} consensus value. Particularly, if $c= \WZ{\frac{1}{N}} \sum_{i=1}^N x_i(0)$ \WZ{holds}, \WZ{then the} average consensus \WZ{can be} achieved.
 The basic discrete-time linear average consensus is represented as
\begin{equation}\label{e1}
x(k+1)=Wx(k),
\end{equation}
where $W$ is a doubly stochastic matrix, and $w_{ij}\neq 0$ if and only if nodes $i$ and $j$ are neighbors. 
By (\ref {e1}), the system will achieve \WZ{the} average consensus exponentially if \WZ{the undirected graph} $\mathcal{G}$ is a connected graph. 
The commonly used weight matrix guaranteeing asymptotic convergence includes Metropolis weights\cite{xiao2005scheme} and Perron matrix, i.e., $W=I-\gamma L$, where $0<\gamma<1/d_m$.
\CR{Under \WZ{both Metropolis weights} and Perron weights, the updating coefficients of node $i$, i.e., $\WZ{w}_{ij}$, $\forall j \in \mathcal{N}_i$, are known by neighbors if the number of neighbors $|\mathcal{N}_i|$ and $N$ are \WZ{available to} neighbors, which provide bases for the later error detection.}

\subsection{Information Set and \WZ{Misbehavior} \CR{Model}}\label{subsec info set}

The rule (\ref{e1}) implies that each node updates its state by using the states of its own and its neighbors. 
Such two-hop information can be properly utilized to facilitate the efficient detection of misbehaviors of targeted nodes \cite{he2013sats,8013830}.
All nodes transmit \WZ{their own} information sets to neighbors at each time.
Specifically, for node $i$ at time $k$, \WZ{its} information set $\Psi_i(k)$ is denoted by 
\begin{equation*}
\Psi_i(k) \!= \! \{i, x_i(k),\pi_i(k) , \varepsilon_i(k-1), \{j, x_j^{(i)}(k-1) , j \in \mathcal{N}_i\}\},
\end{equation*}
where $x_j^{(i)}(k-1)$ is the state value (at time $k-1$) of  node $j$ \WZ{which is}  sent by node $i$ to \WZ{its} neighbors at time $k$. 
\WZ{The term} $\varepsilon_i(k)$ is the compensation added by normal nodes \CR{or the adverse impact brought by misbehaving nodes, which} will be discussed \CR{in detail later}. 
\CR{Let the \WZ{binary} attack detection \WZ{indicator} $\pi_i(k)=1$ or $\pi_i(k)=0$ represent ``attack" or ``no attack", respectively, where the former indicates that the node $i$ \CR{has} detected misbehaving nodes in its neighbors.} 
\CR{Note that $\varepsilon_i(k\WZ{-1})$ is allowed to be non-zero for normal nodes if $\pi_i(k) = 1$, i.e., compensation is added only when a misbehavior is detected.}

\CR{As for misbehavior model,} the misbehaving nodes considered in this paper can be either malicious or faulty.
Faulty nodes may cause \CR{adverse impacts} to the system because of accident faults, e.g., miscalculations.
\WZ{A malicious node aims to disrupt the network functions by manipulating the information set, but can only send the same information to all of its neighbors at each time.}
\CR{If the network is realized by broadcast communication, it is natural to assume that any node sends the same value to all of its neighbors.}
\CR{In the following,} we make three assumptions regarding the specific \WZ{misbehaviors} that \WZ{misbehaving} nodes can generate. 
\begin{assumption}\label{ass1}
Any two misbehaving nodes do not neighbor with each other.
\end{assumption}
\begin{assumption}\label{ass2}
A malicious node can manipulate its information set by changing \CR{the state values of its own and  its neighbors, and delete the IDs and states of its neighbors,} but cannot add any entries.
\end{assumption}
\begin{assumption}\label{ass3}
A normal node will cut off all future communication with \CR{the node(s) that is(are) detected as abnormal and isolated.} 
\end{assumption}
Assumption \ref{ass1} is reasonable when the number of misbehaving nodes is much less than that of normal nodes. \CR{In another word, the} misbehaving nodes are sparsely distributed in the network~\cite{he2013sats}. 
The \CR{attacker capabilities} considered are specified in Assumption \ref{ass2},
\WZ{which includes basic deception attacks such as spoofing attack and false-data injection attack\cite{dibaji2019systems}.}
Malicious nodes prefer not to add any entries because it will be easily detected by \CR{normal nodes with two-hop information.} 
With Assumption \ref{ass3}, the misbehaving nodes will be effectively isolated by all normal nodes, which can be achieved when there are mobile nodes in the network \WZ{\cite{8013830}}.



Since misbehaving nodes will cause \CR{adverse impacts} and normal nodes will add compensation input, 
the discrete-time linear updating rule is given as follows
\begin{equation}\label{e2}
x(k+1)=Wx(k)+\varepsilon(k),
\end{equation} 
where $\varepsilon(k) = \WZ{[\varepsilon_1(k), \varepsilon_2(k),\ldots,\varepsilon_N(k)]^\top}$ is the input vector. 
\WZ{The term $\varepsilon_i(k)$ is the error \CR{input} for misbehaving \CR{node} $i\in \mathcal{V}_m$, while  $\varepsilon_i(k)$ is the  compensation input for normal \CR{node} $i \in \mathcal{V}_s$.}

\WZ{Considering there may be link failures during communication, we refer to link failures as the phenomenon that may prevent information set from being received at each desired time. We denote by $p$ the probability of connection between nodes, i.e., link failure occurs with probability $1-p$ between each pair of  nodes independently.}
\WZ{In order to facilitate the analysis when we consider link failures,  the errors caused by misbehaving nodes are characterized to obey an unknown distribution~\cite{4663901}.
It is supposed that misbehaving node $i$ affects the system (or the error equals to zero) with probability $\theta_i \in [0,1]$.
Let $X_i(k)$ be the random variable of attack or not, i.e., $X_i(k) \sim \mathcal{B}(1,\theta_i)$, where $\mathcal{B}$ represents the Bernoulli distribution.
The term $X_i(k)$ is used because we need to consider the probability that $\varepsilon_i(k)=0$ separately.
Further, the misbehaving nodes affect the system with a certain mechanism.
Hence, if node $i$ affects the system at time $k$, the random variable of the error, i.e., $Y_i(k)$, obeys a certain distribution with expectation $\mu_i$ and variance $\sigma_i^2$. 
We assume that the probability of $Y_i(k) = 0$ is zero, and $X_i(k)$ and $Y_i(k)$ are independent.
Hence, it holds that $\varepsilon_i(k) = X_i(k)Y_i(k)$ for misbehaving nodes.
Then, we have
\begin{equation*}
\begin{aligned}
\mathbb{E} [ \varepsilon_i(k) ]  ~&= \theta_i\mu_i,\\ 
\mathbb{D} [ \varepsilon_i(k) ]   ~&= \theta_i\sigma_i^2+(1-\theta_i)\theta_i\mu_i^2 \triangleq \sigma_{\varepsilon_i}^2.
\end{aligned}
\end{equation*}
Nevertheless, $\varepsilon_i(k)$ can obey an arbitrary distribution because we do not pose any restriction on the distributions of $X_i(k)$ and $Y_i(k)$, and do not need to know the expectation and variance of them, which is different from the faults.
The difference between malicious nodes and faulty nodes is that malicious nodes will attack the system continuously, while faulty nodes only cause \CR{accidental disturbances} in a limited period.
}

\subsection{Problem of Interest}
We consider a multi-agent system with misbehaving nodes which is described by $\mathcal{G}$. Each node owns an initial state $x_i(0)$ and the system updates its states by (\ref{e2}). 
\WZ{In such a setting}, representative consensus algorithms \WZ{such as W-MSR \cite{6315661}} cannot guarantee accurate consensus values, and fault detection and isolation methods may mistake faulty nodes for malicious ones.
Thus, extra input is needed to mitigate errors \WZ{caused by misbehaving nodes}, and a fault-tolerance mechanism is \WZ{called} for faulty nodes. 
This paper aims to develop a distributed detection and compensation algorithm to achieve resilient average consensus.
\WZ{With the proposed algorithm}, normal nodes \CR{can} detect the errors of neighboring misbehaving nodes by examining the information sets from neighbors and mitigate the \CR{adverse} impact by adding compensating input to their own states.
\WZ{Considering that there may be misbehaving nodes with low data utility, e.g., malicious nodes who constantly cause errors and faulty nodes with severe malfunction, isolation are adopted to thoroughly eliminate negative effects.} 
\CR{Besides, given that the communication in multi-agent systems could be unreliable due to link failures (especially in the wireless communication scenario), the consensus process will be influenced if the information set is intermittently unavailable. Therefore, \WZ{we further take link failures into consideration.}}

Under scenarios where all information sets from neighbors are available, we aim to design a \CR{misbehavior-resilient algorithm  to achieve \WZ{resilient} average consensus \WZ{among the nodes in the set after isolation $\mathcal{V}_r$} for the system with malicious nodes and faulty nodes, i.e.,
\begin{equation}\label{ee}
\lim_{k\to \infty} x_j(k)= \frac{1}{|\mathcal{V}_r|} \sum_{u\in \mathcal{V}_r} x_u(0),\quad \forall j\in \mathcal{V}_r,
\end{equation} 
where $\mathcal{V}_r$ is a subset of $\mathcal{V}$ including nodes that are not isolated}.
It is assumed that the subgraph ${\mathcal{G}_r}=( \mathcal{V}_r , \mathcal{E}_r)$  is connected, where $\mathcal{E}_r$ \WZ{($\mathcal{E}_r \subseteq \mathcal{E}$)} denotes \WZ{edge set of the} nodes in $\mathcal{V}_r$.

\WZ{Considering the random communication link failures between nodes, we aim to extend our algorithm to achieve unbiased \WZ{resilient} average consensus in expectation with a relative small variance by designing compensation input, i.e., }
\begin{equation} \label{th2.1}
\mathbb{E} \left[  \lim_{l\to \infty} x_j(l) \right] = \frac{1}{|\mathcal{V}_r|} \sum_{u\in \mathcal{V}_r} x_u(0),\quad \forall j\in \mathcal{V}_r.
\end{equation}

\section{DETECTION AND COMPENSATION DESIGN}
In this section, we \WZ{first propose a detection algorithm to detect misbehaving nodes and then compensate the negative impact caused by these misbehaviors.}
The basic idea \WZ{of detection} is to design two-hop information set \WZ{to form redundancy
relationships}.
By two-hop information set, \WZ{the misbehaviors} can be characterized  in a distributed manner, which leads to four compensation schemes.
Considering the possible link failures, a stochastic scheme is then introduced. 


\subsection{Detection Strategies} 
The first step for each normal node is to determine whether there are misbehaving nodes in the neighborhood and \CR{estimate the amount of error} caused by them. 
According to Assumption \ref{ass2}, a malicious node $i$ can manipulate states of neighbors in the information set, i.e., $x_j^{(i)}(k-1)\WZ{,j\in \mathcal{N}_i}$, or updates its own state with arbitrary \WZ{errors}. 
The detection strategies are characterized by the following two types:
\begin{itemize}
\item  \emph{Detection Strategy \RNum{1}}: \WZ{N}ode $j$ detects whether misbehaving nodes change the state values of $j$ in information set, i.e., $x^{(i)}_j(k) \ne x_j(k), j \in \mathcal{N}_i$. 
If the malicious node deletes the ID and state of $j$ in the information set, it can be regarded as changing the corresponding state value to zero.
\item  \emph{Detection Strategy \RNum{2}}: \WZ{Node $j$ detects whether misbehaving node $i$ follow the update rule $x_i(k+1) = \sum_{j\in \mathcal{N}_i} w_{ij}x^{(i)}_j(k)$.}
\end{itemize}

Each normal node will utilize \WZ{the} two detection strategies to \WZ{check} each neighbor \CR{based on its} information set.
\CR{Obviously, the misbehaviors mentioned in Section \ref{subsec info set}} could be detected by Detection Strategy \RNum{1} \WZ{or} \RNum{2} or both. The update rule is as
\begin{equation}\label{e4-1}
\begin{split}
x_i(k+1) ~&=\sum_{j \in \mathcal{N}_i} w_{ij}x_j(k)+\varepsilon_i(k)\\
 ~& =\sum_{j \in \mathcal{N}_i} w_{ij} x_j(k)+\sum_{j\in \mathcal{N}_i} \varepsilon_i^{j(1)}(k) +\varepsilon_i^{(2)}(k),
\end{split}
\end{equation}
where $\varepsilon_i^{j(1)}(k) $  and $\varepsilon_i^{(2)}(k)$ are the \CR{adverse impacts} detected by Detection Strategy \RNum{1} and \RNum{2}, respectively, and
 \begin{subequations}\label{enoise}
\begin{align}
 \varepsilon_i^{j(1)}(k) ~& = w_{ij} (x_j^{(i)}(k)-x_j(k)) ,\label{e4-b}\\
\varepsilon_i^{(2)}(k) ~& = x_i(k+1)-\sum_{j\in \mathcal{N}_i} w_{ij}x_j^{(i)}(k).\label{e4-c}
\end{align}
\end{subequations}
\DI{Note that each neighbor of misbehaving node $i$ will detect node $i$ with the same error by Detection Strategy \RNum{2}. Hence, $\varepsilon_i^{(2)}(k)$ is used \WZ{instead of} $\varepsilon_i^{j(2)}(k)$.}

\subsection{Compensation Schemes \& D-DCC Algorithm} \label{D-DCC}
In this part, a \underline{d}eterministic \underline{d}etection-\underline{c}ompensation-based \underline{c}onsensus (D-DCC) algorithm is proposed.
The following lemma shows the sufficient condition of resilient average consensus on dynamic system (\ref{e2}).
\begin{lemma}[\hspace{1sp}\cite{8356738}] \label{lemma:1}
For the system (\ref{e2}), if the added input vectors are bounded, i.e., $||\varepsilon(k)||_\infty \le \alpha \rho^k$ for certain $\alpha>0$ and $\rho \in [0,1)$, and the sum of inputs satisfies 
\begin{equation}\label{th1}
\sum_{k=0}^{\infty} \sum_{i=1}^\WZ{N} \varepsilon_i(k)=0,
\end{equation}
then average consensus is achieved exponentially.
\end{lemma}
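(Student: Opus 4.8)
The plan is to solve the affine recursion (\ref{e2}) in closed form and then track two things separately: the conserved quantity $\mathbf{1}^\top x(k)$, which pins down the consensus value, and the decay of the disagreement, which gives the exponential rate. Unrolling (\ref{e2}) yields
\begin{equation*}
x(k) = W^k x(0) + \sum_{t=0}^{k-1} W^{k-1-t}\varepsilon(t).
\end{equation*}
Because $W$ is doubly stochastic and (in this setting) the graph is connected with positive-diagonal Metropolis or Perron weights, $W$ is primitive, so $\lim_{m\to\infty}W^m = J$ with $J:=\frac{1}{N}\mathbf{1}\mathbf{1}^\top$, at an exponential rate set by $\lambda\in[0,1)$, the second-largest eigenvalue modulus of $W$; concretely $\|W^m - J\|_\infty \le C\lambda^m$ for some $C>0$. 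I will repeatedly use $W\mathbf{1}=\mathbf{1}$, $\mathbf{1}^\top W=\mathbf{1}^\top$, and $W^m J = J W^m = J$.

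First I would fix the consensus value. Left-multiplying (\ref{e2}) by $\mathbf{1}^\top$ and using $\mathbf{1}^\top W=\mathbf{1}^\top$ gives $\mathbf{1}^\top x(k) = \mathbf{1}^\top x(0) + S(k)$, where $S(k):=\sum_{t=0}^{k-1}\sum_{i=1}^N \varepsilon_i(t)$. The boundedness hypothesis $\|\varepsilon(k)\|_\infty\le\alpha\rho^k$ makes the double series absolutely convergent, so condition (\ref{th1}) forces the partial sums to vanish, and in fact \emph{exponentially}: writing $S(k)=-\sum_{t\ge k}\sum_i\varepsilon_i(t)$ gives $|S(k)|\le N\alpha\rho^{k}/(1-\rho)$. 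Hence $\mathbf{1}^\top x(k)\to\mathbf{1}^\top x(0)$, so the only admissible consensus value is $\frac{1}{N}\mathbf{1}^\top x(0)$.

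Next I would substitute $W^m = J + (W^m-J)$ into the closed-form solution and split it into three pieces: the homogeneous term $(W^k-J)x(0)$; the averaged-input term $\sum_{t=0}^{k-1}J\varepsilon(t)=\frac{1}{N}S(k)\mathbf{1}$; and the transient-input term $\sum_{t=0}^{k-1}(W^{k-1-t}-J)\varepsilon(t)$, with the remaining constant piece $Jx(0)=\frac{1}{N}(\mathbf{1}^\top x(0))\mathbf{1}$ supplying the target consensus vector. The homogeneous term is $O(\lambda^k)$ by primitivity, and the averaged-input term is $O(\rho^k)$ by the bound on $S(k)$, so both decay exponentially.

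The main obstacle is the transient-input (convolution) term. Bounding it by $\|W^{k-1-t}-J\|_\infty\le C\lambda^{k-1-t}$ and $\|\varepsilon(t)\|_\infty\le\alpha\rho^t$ reduces everything to the geometric convolution $\sum_{t=0}^{k-1}\lambda^{k-1-t}\rho^{t}$, which I would evaluate in closed form: it equals $(\rho^{k}-\lambda^{k})/(\rho-\lambda)$ when $\lambda\neq\rho$ and $k\lambda^{k-1}$ when $\lambda=\rho$, and in both cases it is $O(\beta^k)$ for any fixed $\beta\in(\max\{\lambda,\rho\},1)$. Collecting the three pieces yields $\|x(k)-\frac{1}{N}(\mathbf{1}^\top x(0))\mathbf{1}\|_\infty\le C'\beta^k$, i.e.\ exponential average consensus. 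The one subtlety to flag is the degenerate case $\lambda=\rho$, where the linear factor $k$ appears and must be absorbed into a slightly larger geometric rate $\beta<1$.
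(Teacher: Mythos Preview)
Your proof is correct. Note, however, that the paper does not actually prove Lemma~\ref{lemma:1}: it is quoted from an external reference (the citation \cite{8356738}) and is used as a black-box input to the analysis of D-DCC. So there is no ``paper's own proof'' to compare against here.

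That said, your argument is the standard and natural one for this type of result: unroll the affine recursion, split $W^m=J+(W^m-J)$, use double stochasticity to track $\mathbf{1}^\top x(k)$ and condition~(\ref{th1}) to pin the limit of the running sum $S(k)$ (with the exponential tail bound $|S(k)|\le N\alpha\rho^k/(1-\rho)$), and then control the transient convolution $\sum_{t=0}^{k-1}\lambda^{k-1-t}\rho^t$ by the closed-form geometric identity. Your treatment of the degenerate case $\lambda=\rho$ by passing to any $\beta\in(\max\{\lambda,\rho\},1)$ is clean and suffices. The only implicit step worth flagging is the bound $\|W^m-J\|_\infty\le C\lambda^m$ in the $\infty$-norm; this follows from the spectral ($2$-norm) estimate by norm equivalence on $\mathbb{R}^N$, so it is harmless.
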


It is obvious that the existence of misbehaving nodes can lead to the violation of (\ref{th1}), which is the necessary condition of average consensus. If (\ref{th1}) does not hold, $\lim_{k \to \infty}\sum_{i=1}^{N} x_i(k)= \sum_{i=1}^{N} x_i(0) $ will not hold either.
To achieve exact average consensus, we need to compensate the impact of misbehaviors by introducing an error compensator $\eta_j$ for each normal node $j$.
The compensation values to be added is stored in the error compensator $\eta_j$.
\WZ{Then, node $j$ will select compensation input according to the error compensator.}
We define the following three types of compensation.
\begin{itemize}
\item  \emph{Compensation Scheme \RNum{1}}: To compensate the impact detected by Detection Strategy \RNum{1}, i.e.,
\begin{equation}\label{com1}
\eta_{j}^{i(1)}(k+1)= -w_{ij} (x_j^{(i)}(k)-x_j(k)).
\end{equation}
\item  \emph{Compensation Scheme \RNum{2}}: To compensate the impact detected by Detection Strategy \RNum{2}, i.e., 
\begin{equation}\label{com2}
\eta_{j}^{i(2)}(k+1)= -\varepsilon_i^{(2)}(k)/|\mathcal{N}_i|.
\end{equation}
\item  \emph{Compensation Scheme \RNum{3}}: To compensate the impact of isolation, i.e., 
\begin{equation}\label{com3}
\eta_{j}^{i(3)}(k+1)=(x_i(k+1)-x_i(0))/|\mathcal{N}_i| .
\end{equation}
\end{itemize}
Compensation \RNum{1} is employed by node $j$ when it detects that its neighbor $i$ has changed the value of node $j$ in the information set.
Note that all $|\mathcal{N}_i|$ neighbors of node $i$ detect the misbehaviors \WZ{of node $i$} by Detection Strategy \RNum{2}.
Hence, each node averagely compensates the error.
\WZ{Moreover, Compensation Scheme \RNum{3} is adopted when node $i$ is isolated by neighbors.}
Each neighbor will equally compensate the historical \CR{adverse} impact on average consensus.

Inspired by Lemma \ref{lemma:1}, we adopt a distributed exponential decaying bound of errors, i.e., $\alpha_j\rho_j^k$, $j\in \mathcal{V}_s$, to decide isolation and guarantee the convergence. 
Let $\alpha= N\max_{i\in \mathcal{V}}   \alpha_i  $, $\rho = \max_{i \in \mathcal{V}} \rho_i$. Then, the condition
$||\varepsilon(k)||_\infty \le \alpha \rho^k$ will hold.
Node $j$ \WZ{estimates the error of its neighbor $i$ by (\ref{enoise})}. 
If the error is in the bound, \WZ{i.e. $|\varepsilon_i^{j(1)}(k) + \varepsilon_i^{(2)}(k)| \le \alpha_j\rho_j^k$}, then node $j$ will compensate the error by Compensation Scheme \RNum{1} and \RNum{2}, which is a resilient mechanism for finite errors and accidental errors such as computation error and actuator error.
Otherwise, \CR{it means that the adverse impact caused by node $i$ is too severe such that the consensus process will be seriously affected and the convergence may not be achieved according to Lemma \ref{lemma:1}. In this case, node $j$ will cut off the communication with node $i$ (i.e., node $i$ is isolated) to avoid the future adverse impact from node $i$ and Compensation Scheme \RNum{3} will be used to remedy the historical bad impact from node $i$.}

\CR{In summary, by means of the above three compensation schemes, our resilient average consensus algorithm under deterministic communication scenario is proposed as D-DCC algorithm, which is summarized in Algorithm \ref{algo-1}. For the sake of simplicity, the proposed D-DCC algorithm only shows the execution of node $j\in\mathcal{V}$ and its neighbors.}
\WZ{Specifically,} node $j\in\mathcal{V}$ first \CR{performs Detection Strategies \RNum{1} and \RNum{2}, and Compensation Schemes \RNum{1} and \RNum{2} in order (see steps \ref{line:l1}-\ref{line:l4} in Algorithm \ref{algo-1}).} 
Then, node $j$ checks whether the error of node $i\in\mathcal{N}_j$ beyonds the bound $\alpha_i \rho_i^k$.
Subsequently, node $j$ identifies \WZ{whether node $i$ is to be isolated at this time}. If so, \WZ{node $j$ calculates} $\eta_{j}^{i(3)}(k+1)$ by Compensation Scheme \RNum{3} and updates the new $|\mathcal{N}_j|$ to its remaining neighbors.
In the sequel, node $j$ selects \WZ{$\varepsilon_j(k+1)$} and renewals its state $x_j(k+2)$ with the designed $\varepsilon_j(k+1)$, followed by updating the error compensator $\eta_j=\eta_j-\varepsilon_j(k+1)$. \CR{Note that the selection of the compensation input $\varepsilon_j(k+1)$ could be arbitrary, as long as it guarantees the \WZ{security of} state and the non-increasing property of $|\eta_j|$.}

\begin{remark}
\CR{It should be noted that both normal and misbehaving nodes have the input term $\varepsilon_i(k)$ (i.e., the error or compensation input) in their information sets. If malicious nodes have full knowledge of detection and compensation methods, then they can easily masquerade as normal ones. To avoid this issue, we put the attack detection \WZ{indicator} $\pi_i(k)$ into the information set $\Psi_i(k)$. Note that a normal node is allowed to add non-zero compensation input only when it has detected misbehaviors in the neighborhood.} 
\DI{Then, we adopt a steady compensation sequence which restricts the changes of compensation, i.e., $|\varepsilon_i(k)-\varepsilon_i(k-1)| \le \delta$, \WZ{which is reasonable in practice.} We assume that malicious nodes cannot change the attack detection \WZ{indicator} or have no knowledge of $\delta$. \WZ{Hence, with the above two methods, it may be easy to distinguish malicious nodes with errors and normal nodes with compensation input.}}
\end{remark}
\begin{algorithm}[t]
 \small
    \caption{D-DCC Algorithm}
    \label{algo-1}
    \begin{algorithmic}[1]
    \REQUIRE~~{$\ \ x_j(k), \Psi_i(k+1), i \in \mathcal{N}_j$}.
    \STATE \textbf{Initialize}: 	{set compensator $\eta_j=0$}, {parameters $\alpha_j,\rho_j$}\;
node $j$ exchanges its true initial state $x_j(0)$ and the number of neighbors $|\mathcal{N}_j| $ with neighbors, and store them\;
\FOR {$k=0 : $ Max\_time}
    \FOR {$i \in \mathcal{N}_j$ \WZ{(node $i$ is not isolated \& $\pi_i(k)=0$)}}
	 \STATE{$\varepsilon_i^{j(1)}(k+1) =w_{ij}(x_j^{(i)}(k)-x_j(k))$}.   \label{line:l1}
	 \STATE{$\varepsilon_i^{(2)}(k) = x_i(k+1)-\sum_{j\in N_i} w_{ij}x_j^{(i)}(k) $}.  \label{line:l3}
	\IF  {$\varepsilon_j^{i(1)}(k) \neq 0$}
     \STATE $\eta_{j}^{i(1)}(k+1)= -\varepsilon_i^{j(1)}(k)$. 
      \ENDIF \label{line:l2}
	\IF  {$\varepsilon_i^{(2)}(k)  \neq 0$}
    { \STATE $\eta_{j}^{i(2)}(k+1)= -\varepsilon_i^{(2)}(k)/|\mathcal{N}_i|$.
    }
      \ENDIF \label{line:l4}
		\IF {$|\varepsilon_i^{j(1)}(k)+\varepsilon_i^{(2)}(k)| >\alpha_j \rho_j^k $} \label{line:l5}
    { \STATE Node $j$ cut off the communication with node $i$. 
    }
      \ENDIF \label{line:l6}
			\IF {node $i$ is isolated at this time}\label{line:l7}
    { \STATE  $\eta_{j}^{i(3)}(k+1)=(x_i(k+1)-x_i(0))/|\mathcal{N}_i|$. 
    }
      \ENDIF \label{line:l8}
	\STATE \CR{$\eta_j=\eta_j+\eta_{j}^{i(1)}(k+1)+\eta_{j}^{i(2)}(k+1)+\eta_{j}^{i(3)}(k+1)$.} \label{line:l9}
    \ENDFOR


	\STATE Update its state $x_j(k+2)$ with $\varepsilon_j(k+1)$ by (\ref{e4-1}).
    \STATE \CR{Update the error compensator} $\eta_j=\eta_j-\varepsilon_j(k+1)$.
\ENDFOR
    \end{algorithmic}
\end{algorithm}

\subsection{S-DCC Algorithm}
\CR{
In this subsection, we further investigate the resilient average consensus problem against misbehaviors while considering the possible link failures among nodes. In this case, there are two aspects introducing stochasticity that should be addressed: (1) The communication link status between nodes is stochastic and we assume that the link failure occurs with probability $p$ at each time (see Section \ref{subsec info set}); (2) The compensation is also stochastic because of random misbehaviors and link failures. This stochasticity property brings extra challenges compared with the problems in the deterministic communication scenario (see Section \WZ{\ref{D-DCC}}). That is the corresponding information set of neighbors is not available when a link failure occurs. As a result, some misbehaviors among the relevant nodes may not be detected. To ensure the resilience performance against misbehaving nodes when the information set is randomly unavailable, we propose a \underline{s}tochastic \underline{d}etection-\underline{c}ompensation-based \underline{c}onsensus (S-DCC) algorithm to reach the average consensus in expectation with bounded variance. Specifically, we \WZ{further} propose Compensation Scheme \RNum{4} based on the estimation of the average \WZ{detected errors}:
}


\begin{itemize}
\item  \emph{Compensation Scheme \RNum{4}}: To compensate the impact of undetected misbehaviors, i.e.,
\begin{equation} \label{compensation4}
\eta_{j}^{i(4)}(k+1) = -(k-k_{i}^{j0}-m_j(k))\overline{\varepsilon}_{i}^{j}(k),
\end{equation}
\end{itemize}
where \WZ{$k_{i}^{j0}$ is the last detecting time before node $i$ is first detected by node $j$ as a misbehaving node}, which is treated as the last time before misbehavior occurs.
\WZ{In addition, we have} 
\begin{equation*}
\overline{\varepsilon}_{i}^{j}(k)= \sum_{\varepsilon_{i}^j(k) \in  \Omega_j^{(i)}(k)} \varepsilon_{i}^j(k)/m_j(k),
\end{equation*}
where $m_j(k)$ is the number of times that \WZ{node} $j$ detects node $i$ after time $k_{i}^{j0}$.
The intuition behind Compensation Scheme \RNum{4} is that the mean of detected errors may represent the mean effect by \WZ{misbehaving} nodes in a time window.
Specifically,  \CR{the information set is available, which enables the detection,} at each time independently with probability $p$. When the detection is enabled, node $j$ will detect and compensate  the possible errors. In order to estimate the effect of \WZ{misbehaving} nodes, node $j$ will store the detected error $\varepsilon_i^{j}(k)$ in the error set $\Omega_j^{(i)}(k)$ corresponding to node $i$, and compensate undetected errors according to the error set. 
\CR{The details of S-DCC algorithm are summarized in Algorithm \ref{algo-2}}.
\WZ{Note that at each time of detection, a new detected error will be added to the error set, and a new estimation of undetected error will be utilized to replace the former one. Hence, at step \ref{line:l10} of S-DCC, the former compensation by Compensation Scheme \RNum{4} will be removed.}

\begin{algorithm}[t]
 \small
    \caption{S-DCC Algorithm}
    \label{algo-2}
    \begin{algorithmic}[1]
    \REQUIRE~~{$\ \ x_j(k),\Psi_i(k+1),i \in \mathcal{N}_j $}.
    \STATE \textbf{Initialize}: The same as Algorithm 1;
       set detection probability $p$, detection times $m_j=0$.
\FOR {$k=0 : $ Max\_time}
{  \IF {Detection is enabled}
    \FOR {$i \in \mathcal{N}_j$ \WZ{(node $i$ is not isolated \& $\pi_i(k)=0$)}}
	\STATE Execute steps \ref{line:l1}-\ref{line:l6} in Algorithm \ref{algo-1}.
	\IF {node $j$ is misbehaving}
		\STATE {Store $\varepsilon_i^{j}(k)=\varepsilon_i^{j(1)}(k)+\varepsilon_i^{(2)}(k)/|\mathcal{N}_i|$} in $\Omega_j^{(i)}(k)$.
			\STATE {Calculate $\eta_{j}^{i(4)}(k+1) $ by (\ref{compensation4})}.
     \ENDIF
\IF {$i$ is isolated at this step}
    { \STATE  $\eta_{j}^{i(3)}(k+1)= (x_j(k+1)-x_j(0))/|\mathcal{N}_i|$. 
    }
      \ENDIF
	\STATE $\eta_j=\eta_j+\eta_{j}^{i(1)}(k+1)+\eta_{j}^{i(2)}(k+1)+\eta_{j}^{i(3)}(k+1)+\eta_{j}^{i(4)}(k+1)-\eta_{j}^{i(4)}(k)$. \label{line:l10}
    \ENDFOR
\ENDIF
}
	\STATE Update its state $x_j(k+2)$ with $\varepsilon_j(k+1)$ by (\ref{e4-1}).
	\STATE \WZ{Update the error compensator} $\eta_j=\eta_j-\varepsilon_j(k+1)$.
\ENDFOR
    \end{algorithmic}
\end{algorithm}

\section{Performance Analysis}
In this section, we prove that for D-DCC, all \WZ{misbehaving} nodes will be detected and accurate \WZ{resilient} average consensus will be achieved. Additionally, for S-DCC, we demonstrate that all misbehaving nodes will be detected with probability one and \WZ{resilient} average consensus in expectation will be achieved.
\WZ{Besides,} we analyze the accuracy of S-DCC by the Wasserstein distance. 

\subsection{Analysis of D-DCC}
First, \WZ{with respect to the detection performance of D-DCC, the} following lemma shows the effectiveness.
\begin{lemma}
If Assumptions \ref{ass1}-\ref{ass3} hold, then all misbehaving nodes will be detected and some of them will be isolated  by Algorithm 1. 
\end{lemma}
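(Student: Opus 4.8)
The plan is to show that the two quantities inspected by Detection Strategies~\RNum{1} and~\RNum{2} together account for \emph{every} deviation a misbehaving node can produce under Assumptions~\ref{ass1}--\ref{ass2}, so that a node which actually misbehaves is necessarily flagged by at least one of the two tests at any step $k_0$ where its influence is nonzero; isolation is then read off directly from the threshold test inside Algorithm~\ref{algo-1}. Throughout we may assume each $i\in\mathcal V_m$ satisfies $\varepsilon_i(k_0)\neq 0$ for some $k_0$, since otherwise $i$ is indistinguishable from a normal node.

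First I would verify that the detection block of Algorithm~\ref{algo-1} is in fact executed on every misbehaving node. Fix $i\in\mathcal V_m$. By Assumption~\ref{ass1} each $j\in\mathcal N_i$ lies in $\mathcal V_s$ and hence runs the loop; the one point to check is the guard ``$\pi_i(k)=0$''. A faulty node (a miscalculation) does not raise this flag, and a malicious node cannot alter it (the remark following Algorithm~\ref{algo-1}); moreover it has no legitimate detection of its own, since all of its neighbors are normal and any compensating normal neighbor advertises $\pi=1$ and is therefore not treated as an attacker. Hence $\pi_i(k)=0$, and every normal neighbor evaluates $\varepsilon_i^{j(1)}(\cdot)$ and $\varepsilon_i^{(2)}(\cdot)$ for $i$ via (\ref{e4-b})--(\ref{e4-c}).

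Next I would invoke the decomposition in (\ref{e4-1})--(\ref{enoise}): $\varepsilon_i(k_0)=\sum_{j\in\mathcal N_i}\varepsilon_i^{j(1)}(k_0)+\varepsilon_i^{(2)}(k_0)$, so at least one summand is nonzero (Assumption~\ref{ass2} confines the manipulation to the $x_i$ and $x_j^{(i)}$ entries, and deletions of the latter, which is exactly what these tests examine, a deleted entry being read as the value $0$). If $\varepsilon_i^{j(1)}(k_0)\neq0$ for some $j\in\mathcal N_i$, then $w_{ij}\neq0$ forces $x_j^{(i)}(k_0)\neq x_j(k_0)$, and node $j$, which stores its own true state, flags $i$ by Strategy~\RNum{1} on receiving $\Psi_i(k_0+1)$. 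If instead $\varepsilon_i^{(2)}(k_0)\neq0$, then $x_i(k_0+1)\neq\sum_{l\in\mathcal N_i}w_{il}x_l^{(i)}(k_0)$; each neighbor $j$ has $x_i(k_0+1)$ and $\{x_l^{(i)}(k_0):l\in\mathcal N_i\}$ from $\Psi_i(k_0+1)$ and knows the weights $w_{il}$ (fixed by $N$ and the exchanged neighbor counts under the Metropolis/Perron rule), so it recomputes $\varepsilon_i^{(2)}(k_0)\neq0$ and flags $i$ by Strategy~\RNum{2}. Either way $i$ is detected, and since $i\in\mathcal V_m$ was arbitrary, all misbehaving nodes are detected. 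For the isolation claim, as soon as a neighbor $j$ detects $i$ at step $k$ it compares $|\varepsilon_i^{j(1)}(k)+\varepsilon_i^{(2)}(k)|$ with $\alpha_j\rho_j^k$ and severs the link when this is exceeded (isolation, by Assumption~\ref{ass3}). Since $\rho_j\in[0,1)$ the threshold vanishes, so any misbehaving node whose detected error stays bounded away from $0$ along an infinite set of steps---in particular a malicious node that attacks continuously, whose impact cannot decay to $0$ without becoming asymptotically harmless---must eventually violate the bound and be isolated, which gives ``some of them''.

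The main obstacle is bookkeeping rather than estimation: (i) justifying the guard $\pi_i(k)=0$ for every misbehaving node, which is exactly where the convention ``malicious nodes cannot change $\pi_i$'' must be used; (ii) confirming that a neighbor $j$ can reconstruct all weights $w_{il}$, $l\in\mathcal N_i$, entering Strategy~\RNum{2} from the data actually exchanged, which is why the paper restricts to Metropolis/Perron weights; and (iii) making precise, for the isolation half, what ``attacks continuously'' forces about the detected error, since a malicious node whose error decays faster than every $\rho_j^k$ would be detected but never isolated, so ``some'' must be understood as ``those with persistent impact''.
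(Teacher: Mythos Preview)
Your proposal is correct and follows essentially the same two-case argument as the paper: Strategy~\RNum{1} catches tampered or deleted neighbor entries (with Assumption~\ref{ass1} guaranteeing the affected neighbor is normal), Strategy~\RNum{2} catches update-rule violations (since the weights $w_{ij}$ are known to neighbors), and isolation is read off from the threshold test. Your treatment is in fact more careful than the paper's---you explicitly justify the guard $\pi_i(k)=0$, invoke the decomposition (\ref{e4-1})--(\ref{enoise}) to force at least one detection term to be nonzero, and use $\rho_j<1$ to argue that a persistently misbehaving node must eventually violate the shrinking bound---whereas the paper's proof simply asserts ``once the error is out of the bound, the misbehaving node will be isolated'' without explaining why this must happen for some node.
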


\begin{proof}
Suppose that the misbehaving node $i$ changes the state value of its neighbor node $j$ in the information set $\Psi_i(k+1)$. According to Assumption 1, node $j$ is normal. Hence, when node $j$ receives the information \WZ{set} from node $i$, it will find out that $x_j^{(i)}(k)\neq x_j(k)$ and \WZ{the misbehaving} node $i$ will be detected by Detection Strategy \RNum{1}. Similarly, if node $i$ deletes the ID and state of \WZ{node} $j$, it will \WZ{also} be detected by node $j$.

If the misbehaving node \WZ{$i$} does not follow the update rule based on the information set, i.e., $x_i(k+1)\neq \sum_{j\in \mathcal{N}_i} w_{ij}x^{(i)}_j(k)$, it will be detected by all neighbors by Detection Strategy \RNum{2}, because $w_{ij}$ is known by all neighbors.

According to Algorithm \ref{algo-1}, if the error is below the local bound, \WZ{then} the misbehaving node will not be isolated. However, once the error is out of the bound, the misbehaving node will be isolated. 
\end{proof}

Next, we evaluate the consensus performance of D-DCC algorithm in the presence of misbehaving nodes in the following theorem.

\begin{theorem}\label{th1.1}
If Assumptions \ref{ass1}-\ref{ass3} hold, then D-DCC achieves \WZ{resilient} average consensus \CR{among the nodes in the set after isolation $\mathcal{V}_r$}, i.e., (\ref{ee}) holds.
\end{theorem}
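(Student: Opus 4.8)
The plan is to derive (\ref{ee}) in two stages: first show that every node in $\mathcal{V}_r$ converges to a common value $c^\star$, and then identify $c^\star$ with $\frac{1}{|\mathcal{V}_r|}\sum_{u\in\mathcal{V}_r}x_u(0)$ through a conservation argument. For the first stage I would use Lemma 2 together with Assumption \ref{ass3}: isolation is irreversible and there are only $N$ nodes, so every node that is ever isolated is isolated within a finite horizon, and hence there is a finite time $K^\star$ after which the graph is frozen at $\mathcal{G}_r$, which is connected by hypothesis. For $k\ge K^\star$ the restriction of (\ref{e4-1}) to $\mathcal{V}_r$ reads $x_{\mathcal{V}_r}(k+1)=W_r x_{\mathcal{V}_r}(k)+\tilde\varepsilon(k)$ with $W_r$ the (refreshed) doubly stochastic weight matrix of $\mathcal{G}_r$, where $\tilde\varepsilon(k)$ gathers the residual errors of the still-present misbehaving nodes and the compensation inputs of the normal nodes. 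A node that is never isolated must pass the test $|\varepsilon_i^{j(1)}(k)+\varepsilon_i^{(2)}(k)|\le\alpha_j\rho_j^{k}$ forever, so its injected errors decay at least geometrically; and each compensator receives only a geometrically decaying inflow plus finitely many bounded one-off jumps, so the compensation inputs also decay geometrically (after a finite transient). Thus $\|\tilde\varepsilon(k)\|_\infty\le\alpha'(\rho')^{k}$ for some $\alpha'>0$, $\rho'\in[0,1)$, and a decaying-perturbation argument of the kind underlying Lemma \ref{lemma:1} (using $W_r^{k}\to\frac{1}{|\mathcal{V}_r|}\mathbf{1}\mathbf{1}^\top$ for the connected, primitive $W_r$) gives $x_j(k)\to c^\star$ for all $j\in\mathcal{V}_r$ with $c^\star=\frac{1}{|\mathcal{V}_r|}\lim_{k\to\infty}\sum_{i\in\mathcal{V}_r}x_i(k)$.

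For the second stage I would maintain a running conservation law. Write $\mathcal{V}_r(k)$ for the set of nodes not yet isolated at time $k$, so $\mathcal{V}_r(k)$ decreases to $\mathcal{V}_r$, and prove by induction on $k$ that
\[
\Big|\sum_{i\in\mathcal{V}_r(k)}\big(x_i(k)+\eta_i(k)\big)-\sum_{u\in\mathcal{V}_r(k)}x_u(0)\Big|\le\alpha\rho^{k},
\]
the right side bounding the size of the at-most-one-step-old misbehaving error not yet reflected in a compensator. Three bookkeeping facts drive the induction. (a) $W$ (and later $W_r$) is doubly stochastic, so the consensus step preserves $\sum_i x_i$. (b) By the last two updates in Algorithm \ref{algo-1}, a normal node that adds $\varepsilon_j(k+1)$ to its state subtracts the same amount from $\eta_j$, so $x_j+\eta_j$ is unaffected by compensation. (c) Compensation Schemes \RNum{1} and \RNum{2} are designed so that, for every misbehaving node $i$, the total $-\sum_{j\in\mathcal{N}_i}\varepsilon_i^{j(1)}(k)-\varepsilon_i^{(2)}(k)=-\varepsilon_i(k)$ that enters the neighbors' compensators exactly cancels the error $\varepsilon_i(k)$ that $i$ fed into $\sum_i x_i$ one step earlier (this uses Assumption \ref{ass1}, so all neighbors of $i$ are normal, and Assumption \ref{ass2}); and Compensation Scheme \RNum{3} in (\ref{com3}) is designed so that when $i$ leaves the network the loss it causes in $\sum_i x_i$, the jump it produces in $\sum_i\eta_i$, and the loss $x_i(0)$ it removes from the target $\sum_u x_u(0)$ balance to zero. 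Letting $k\to\infty$: $\mathcal{V}_r(k)$ stabilizes at $\mathcal{V}_r$, the bound $\alpha\rho^{k}\to 0$, and each $|\eta_i|$, being non-increasing once the inflow has died out and actively drawn down by the choice of $\varepsilon_i$, tends to $0$; hence $\lim_{k\to\infty}\sum_{i\in\mathcal{V}_r}x_i(k)=\sum_{u\in\mathcal{V}_r}x_u(0)$. Combined with the first stage this gives $c^\star=\frac{1}{|\mathcal{V}_r|}\sum_{u\in\mathcal{V}_r}x_u(0)$, which is (\ref{ee}).

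I expect the main obstacle to be the honest handling of the time-varying topology induced by isolation. One must check that the weight refresh after an isolation keeps the matrix doubly stochastic on the still-connected $\mathcal{G}_r$; that Compensation Scheme \RNum{3} transfers exactly the budget $x_i(0)$ --- no more, no less --- out of the shrinking network, which forces one to pin down the precise instant at which node $i$'s state last enters a neighbor's update and to reconcile it with the $x_i(k+1)$ appearing in (\ref{com3}); and that the compensators of all surviving nodes are driven to zero under any admissible selection of the compensation inputs, where the prescribed non-increasing-$|\eta_j|$ property together with the geometric decay of the inflow is what actually closes the argument. The remaining ingredients --- finite-time isolation, doubly-stochastic bookkeeping, and invoking the decaying-perturbation part of Lemma \ref{lemma:1} --- are routine.
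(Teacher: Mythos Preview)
Your proposal is correct and follows essentially the same two-stage strategy as the paper: establish convergence on $\mathcal{V}_r$ via the decaying-perturbation argument of Lemma~\ref{lemma:1}, and then identify the limit through a conservation argument driven by Compensation Schemes~\RNum{1}--\RNum{3}. The paper carries out the conservation step by a case split (isolated versus non-isolated misbehaving node) and appeals directly to~(\ref{th1}), whereas you track the unified running invariant $\sum_{i\in\mathcal{V}_r(k)}(x_i(k)+\eta_i(k))$ and explicitly flag the issues of weight refresh, isolation timing, and compensator drainage that the paper's proof leaves implicit; your route is more careful but not materially different.
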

\begin{proof}
First, we illustrate that consensus will be achieved by Algorithm \ref{algo-1}.
Note that $W$ is doubly stochastic and \CR{each $\varepsilon_i(k),i\in\mathcal{V}_r$ satisfies the condition $\varepsilon_i(k) \le \alpha_i \rho_i^k$. Therefore, the consensus can be achieved according to Lemma 1 after compensating the impacts of nodes that are not in the set $\mathcal{V}_r$.}
If node $i$ is isolated, let $W_{ \{ i\}} $ be  the matrix obtained by  deleting $i$-th row and column of $W$. It follows that $W_{ \{ i\}}$ is still a matrix with Perron weight or Metropolis weights. Hence, consensus will be achieved among \CR{the remaining nodes in $\mathcal{V}_r$.}

Next, we prove the limit value is average consensus \CR{among the nodes in the set after isolation $\mathcal{V}_r$}. Without loss of generality, we consider a subsystem composed of misbehaving node $i$ and its neighbors. 
\CR{Since $W$ is doubly stochastic, we have the following according to (\ref{e2}):}
\begin{equation}\label{proof0.2}
\sum_{j \in \mathcal{V}} x_j(k+1)=\sum_{j \in \mathcal{V}} x_j(0)+ \sum_{l=1}^{k} \sum_{j \in \mathcal{V}}\varepsilon_j(l).
\end{equation}

\WZ{\textbf{Case 1:}} If node $i$ is not isolated, according to Compensation Scheme \RNum{1} and \RNum{2}, we have
\begin{equation}\label{proof0.1}
\sum_{l=0}^{ \infty } \{  \varepsilon_i(l) + \sum_{j \in \mathcal{N}_i} (\eta_{j}^{i(1)}(l+1)+\eta_{j}^{i(2)}(l+1) ) \} =0.
\end{equation}
Hence, (\ref{th1}) holds.

\WZ{\textbf{Case 2:}} If node $i$ is isolated at time $k+1$, we can regard it as staying at the state $x_i(k+1)$. 
Since $W$ is a doubly stochastic matrix at each-step, we have that (\ref{proof0.2}) holds.
 Node $i$ is regarded to stay at the value $x_i(k+1)$, i.e. $\varepsilon_i(l)=0,\forall l > k$. \CR{Because Compensation Schemes \RNum{1} and \RNum{2} will compensate the impact before isolation, the condition (\ref{proof0.1}) holds.}
In addition, Compensation Scheme \RNum{3} will compensate \WZ{the influence of node $i$ on the system before isolation.} 
Thus, we have
\begin{equation}\label{proof2}
\sum_{l=1}^{ \infty }  \sum_{j \in \mathcal{V}} \varepsilon_j(l)= x_i(k+1)-x_i(0).
\end{equation}
Combining  (\ref{proof0.2}) with (\ref{proof2}) and noting that node $i$ is regarded as staying at the value $x_i(k)$ after isolation, it follows that
\begin{equation}\label{proof4}
\lim_{l \to \infty }\sum_{j \in \mathcal{V}/ \{i\}} x_j(l) =  \sum_{j \in \WZ{\mathcal{V}/\{i\}}} x_j(0).
\end{equation}
Generally, we have  
\begin{equation}\label{proof5}
\lim_{l \to \infty }\sum_{j \in \mathcal{V}_r} x_j(l) =  \sum_{j\in \mathcal{V}_r} x_j(0).
\end{equation}
Hence, the \WZ{resilient} average consensus \CR{among the nodes in the set after isolation $\mathcal{V}_r$} is achieved.
\end{proof}

\begin{remark}
Theorem \ref{th1.1} guarantees the accurate average consensus even if there are misbehaving nodes in the system. 
Due to the use of local error bound, D-DCC provides tolerance for accidental miscalculations and transmission errors.
\DI{If isolation is not adopted in these scenarios, e.g., a node malfunctions transitorily, \WZ{the faulty node is still able to accomplish the system mission,} which may contribute to the overall efficiency of the system.} 
The tolerance depends on the parameters $\alpha_i$ and $\rho_i$.
Note that a larger $\alpha_i\rho_i$ will improve the fault tolerance but reduce the convergence speed.
\end{remark}

\subsection{Analysis of S-DCC}
\CR{Before going into the analysis of S-DCC algorithm,} \WZ{we first define several notations.
Summing up compensation input of all neighbors of \WZ{misbehaving} node $i$}, the average compensation of $\varepsilon_i(k)$ is $ -\sum_{j \in \mathcal{N}_i}\overline{\varepsilon}_{i}^{j}(k) \triangleq -\overline{\varepsilon}_{i}(k)$, where  $\overline{\varepsilon}_{i}(k)$ is the average error of detecting times.
Note that there may be some faulty nodes that will not be isolated. 
These faulty nodes may cause errors due to accidental miscalculations.
Hence, it is reasonable to assume that the errors of \CR{the faulty node $i$} occur in a period from $k_i^0$ to $k_i^1$.
Denote subsets $\mathcal V_f$ and $\mathcal V_M$  be the set of faulty nodes and malicious nodes, respectively. 
We provide the following theorem to analyze the performance of S-DCC.
\begin{theorem}\label{thm:sdca}
If Assumptions  1-3  hold, then S-DCC achieves \WZ{resilient} average consensus in expectation \WZ{among the nodes in the set after isolation $\mathcal{V}_r$}, i.e., $ \forall j\in \mathcal{V}_r$, \CR{we have}
\begin{equation} \label{th2.11}
\mathbb{E} \big [ \lim_{l\to \infty} x_j(l)\big ] = \frac{1}{|\mathcal{V}_r|} \sum_{u\in \mathcal{V}_r} x_u(0),
\end{equation}
\begin{equation} \label{th2.2}
 \mathbb{D}\big [ \lim_{l\to \infty} x_j(l)\big ] \le \sum_{i\in \mathcal{V}_m} D_m + \sum_{i\in \mathcal{V}_f} D_f ,
\end{equation} 
where 
\begin{equation*}
\begin{aligned}
~& D_m = {{(k^\text{iso}_i-M_i)}(1+\frac{k^\text{iso}_i-M_i}{M_i})\sigma_{\varepsilon_i}^2},\\
~& D_f \!= \! \frac{1-p}{p^2}(\frac{\sigma_{\varepsilon_i}^2}{M_i}+\theta_i^2\mu_i^2)
\!+\!{{(k^{1}_i-k^{0}_i)}(1+\frac{k^{1}_i-k^{0}_i}{M_i})\sigma_{\varepsilon_i}^2},
\end{aligned}
\end{equation*}
in which $k^\text{iso}_i$ is the isolation time of node $i$ and $M_i = \min_{j\in \mathcal{N}_i} m_j(k^\text{iso}_i)$. 
In addition, the consensus value is bounded, i.e.,
\begin{equation}\label{th2.3}
\big| \lim_{l\to \infty} x_j(l)- \frac{1}{|\mathcal{V}_r|} \sum_{u\in \mathcal{V}_r} x_u(0) \big| \le \displaystyle{ \frac {\alpha \rho|\mathcal{V}_m|} {(1-\rho)|\mathcal{V}_r|}}. 
\end{equation}
\end{theorem}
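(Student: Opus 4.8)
The plan is to reduce all three claims to statements about a single scalar random variable---the total injected mass $R:=\sum_{l\ge 0}\sum_{j\in\mathcal{V}}\varepsilon_j(l)$---and then to analyze $R$ one misbehaving node at a time. First I would establish path-wise convergence: after the finitely many isolations, (\ref{e2}) runs on the connected subgraph $\mathcal{G}_r$ with the doubly stochastic matrix $W_{\{\cdot\}}$, and the algorithm caps each compensation step by $\delta$ while driving $|\eta_j|$ monotonically to zero, so---together with the exponentially decaying bound $\|\varepsilon(k)\|_\infty\le\alpha\rho^k$ that the isolation rule enforces on the surviving nodes---Lemma \ref{lemma:1} applies on every sample path and $x_j(k)\to c$ for a common (random) value $c$. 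Summing (\ref{e2}) over $\mathcal{V}$, using double stochasticity, freezing each isolated node $i$ at $x_i(k_i^{\mathrm{iso}})$ and invoking Compensation Scheme \RNum{3} exactly as in the proof of Theorem \ref{th1.1}, I would obtain $|\mathcal{V}_r|\,c=\sum_{u\in\mathcal{V}_r}x_u(0)+R$ with $R=\sum_{i\in\mathcal{V}_m}R_i$, where $R_i$ collects the error injected by node $i$ together with all compensation that its neighbors add on account of $i$ through Schemes \RNum{1}, \RNum{2} and \RNum{4}. Since every injected quantity is dominated by $\alpha\rho^k$, the limit and the expectation commute, so (\ref{th2.11})--(\ref{th2.3}) become statements about $\mathbb{E}[R]$, $\mathbb{D}[R]$ and $|R|$.

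For unbiasedness, Assumption \ref{ass1} makes the neighborhoods of distinct misbehaving nodes disjoint and all-normal, so the $R_i$ involve disjoint error streams ($\varepsilon_i(l)=X_i(l)Y_i(l)$, independent across $l$ with mean $\theta_i\mu_i$) and independent detection coins, and are therefore mutually independent. Fixing $i$ and conditioning on its neighbors' $\mathrm{Bernoulli}(p)$ detection pattern, Schemes \RNum{1}--\RNum{2} cancel every detected error exactly, and Scheme \RNum{4} replaces the undetected ones by the empirical mean of the detected errors, rescaled by the limiting ratio of undetected to detected slots---a $\tfrac{1-p}{p}$-type factor for a never-isolated faulty node and a finite-window analogue for a malicious node isolated at $k_i^{\mathrm{iso}}$. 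Because the errors are identically distributed and independent of the coins, the conditional mean of this estimator matches the conditional mean of the undetected-error sum, so $\mathbb{E}[R_i\mid\text{pattern}]=0$, hence $\mathbb{E}[R]=0$ and (\ref{th2.11}) follows.

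For the variance, independence of the $R_i$ gives $\mathbb{D}[c]=|\mathcal{V}_r|^{-2}\sum_{i\in\mathcal{V}_m}\mathbb{D}[R_i]\le\sum_{i\in\mathcal{V}_m}\mathbb{D}[R_i]$. Conditionally on the detection pattern, $R_i$ is a sum of i.i.d.\ error terms minus a rescaled empirical mean over another block of the same terms, so I would use the elementary identity $\mathbb{D}\big[\sum_{1}^{a}\xi_l-\tfrac{a}{b}\sum_{1}^{b}\xi_l\big]=a\big(1+\tfrac{a}{b}\big)\sigma^2$ with $\sigma^2=\sigma_{\varepsilon_i}^2$; taking $a=k_i^{\mathrm{iso}}-M_i$, $b=M_i=\min_{j\in\mathcal{N}_i}m_j(k_i^{\mathrm{iso}})$ and using $m_j\ge M_i$ yields $\mathbb{D}[R_i]\le D_m$ for malicious nodes, while the same computation over the error window $[k_i^0,k_i^1]$, augmented by a law-of-total-variance term for the random length of the undetected window (per-slot mean $\theta_i\mu_i$, contributing $\tfrac{1-p}{p^2}\big(\tfrac{\sigma_{\varepsilon_i}^2}{M_i}+\theta_i^2\mu_i^2\big)$), yields $\mathbb{D}[R_i]\le D_f$ for faulty nodes; summing over $i$ gives (\ref{th2.2}). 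For the deterministic bound, any node $i$ still acting at time $k$ has not triggered isolation, so $|\varepsilon_i(k)|\le\alpha\rho^k$, and the total mass that node $i$ can leave behind uncompensated is at most $\sum_{k\ge 1}\alpha\rho^k=\alpha\rho/(1-\rho)$; summing over the $|\mathcal{V}_m|$ misbehaving nodes and dividing by $|\mathcal{V}_r|$ gives (\ref{th2.3}).

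I expect the variance step to be the main obstacle. The normalization $m_j(k)$ inside the Scheme \RNum{4} estimator is random and correlated with the detected errors, and the length of the undetected window is itself a random stopping time, so none of these counts may be treated as deterministic; a careful conditioning / law-of-total-variance argument is required, together with passing to the worst-case neighbor $M_i=\min_{j\in\mathcal{N}_i}m_j(\cdot)$ so that the resulting bound is uniform over $\mathcal{N}_i$ and hence over the network.
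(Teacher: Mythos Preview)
Your approach is essentially the paper's: both reduce the three claims to the scalar residual $R$ (total error plus compensation), treat malicious and faulty nodes separately via Schemes \RNum{1}--\RNum{4}, use the identity $\mathbb{D}\bigl[\sum_1^a\xi_l-\tfrac{a}{b}\sum_1^b\xi_l\bigr]=a(1+\tfrac{a}{b})\sigma^2$ for the variance bound, and finish (\ref{th2.3}) with the geometric series $\sum_{k\ge1}\alpha\rho^k$. One correction: Assumption~\ref{ass1} does \emph{not} make the neighborhoods of distinct misbehaving nodes disjoint---a normal node can be adjacent to several misbehaving ones---so your justification for the independence of the $R_i$ is misstated; the independence you need follows instead from the assumed independence of the error streams $\varepsilon_i(\cdot)$ across $i$ and the per-edge independence of the link-failure coins, which is enough for the variance to split as a sum.
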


\begin{proof}
First, we illustrate that all \CR{malicious} nodes will be detected with probability one.
For each \CR{malicious} node $i$, it \CR{affects} the system with probability $\theta_i,0 < \theta_i \le 1$. \DI{For each normal node $j\in\mathcal{N}_i$, it  detects the misbehavior of node $i$ with  probability $p,0<p \le 1$.} The probability of the event that node $i$ is detected by $j$ in no later than time $k$ is
\begin{equation}\label{pr01}
P(k)=1-(1-p\theta_i)^k.
\end{equation}
By taking the limit on both sides of (\ref{pr01}), we have
\begin{equation*}\label{pr02}
\lim_{k \to \infty} P(k)=\lim_{k \to \infty} (1-(1-p\theta_i)^k) =1.
\end{equation*}
Hence, all \CR{malicious} nodes will be detected with probability one.
\WZ{Similarly, all \CR{malicious} nodes will be \CR{isolated} with probability one because the bound $\alpha \rho ^k = 0$  as $k \to \infty$.}

Next, we will \CR{prove that}
\begin{equation} \label{sdcapr1}
\mathbb{E} \big[ \lim_{l \to \infty }\sum_{j \in \mathcal{V}_r} x_j(l) \big] =  \sum_{j \in \mathcal{V}_r} x_j(0).
\end{equation}
\WZ{For the sake of simplicity, we perform analysis on a subsystem composed of the misbehaving node $i$ and its neighbors in $\mathcal{N}_i$ in the following.} 
Note that $X_i(l)$  and $Y_i(l)$ are independent.

\WZ{Case 1}: Considering a \CR{malicious} node $i$, the expectation of the sum of its error within time $k$ is
\begin{equation}\label{th4}
\mathbb{E} \left[ \sum_{l=1}^{k}\varepsilon_i(l) \right]=\mathbb{E} \left[ \sum_{l=1}^{k}X_i(l)Y_i(l)\right] = k\theta_i\mu_i.
\end{equation}

Without loss of generality, \CR{we consider that all the neighbor nodes in $\mathcal{N}_i$ detect the misbehavior of node $i$ at the same time.}
The expectation of $\sum_{j\in \mathcal{N}_i} \overline{\varepsilon}_{i}^{j}$ satisfies
\begin{equation*}
\begin{aligned}
\mathbb{E}\Big[ \sum_{j \in \mathcal{N}_i} \overline{\varepsilon}_{i}^{j}(k) \Big] ~&=\mathbb{E}\Big[ \frac{1}{m_j}{\sum_{j \in \mathcal{N}_i} \sum_{\varepsilon_{i}^j(k) \in  \Omega_j^{(i)}} \varepsilon_{i}^j(k)} \Big]\\
~&= \mathbb{E}\big[\varepsilon_i(l) \big]
=\theta_i\mu_i.
\end{aligned}
\end{equation*}
The expectation of the Compensation Scheme \RNum{1}, \RNum{2}, \RNum{4} is
\begin{equation}\label{th5}
\begin{aligned}
&\mathbb{E} \left[ \sum_{j \in \mathcal{N}_i} \sum_{l=1}^{k} (\eta_{ji}^{(1)}(l)+\eta_{ji}^{(2)}(l)+\eta_{ji}^{(4)}(l))\right] \\
=&\mathbb{E} \left[ {\sum_{j \in \mathcal{N}_i} \sum_{\varepsilon_{i}^j(k) \in  \Omega_j^{(i)}} \varepsilon_{i}^j(k)}-(k-m_j(k))\sum_{j \in \mathcal{N}_i}\overline{\varepsilon}_{i}^{j} \right] \\
=&\mathbb{E} \left[ -k\sum_{j \in \mathcal{N}_i} \overline{\varepsilon}_{i}^{j} (k)\right] =-k\theta_i\mu_i.
\end{aligned}
\end{equation}
Let $k+1=k^\text{iso}_i$ for (\ref{com3}).
Therefore, combining (\ref{com3}), (\ref{th4}) and (\ref{th5}), we have
\begin{equation*}
\begin{aligned}
\mathbb{E}\left[ \sum_{l=1}^{k}(\varepsilon_{i}(l)+ \sum_{j\in \mathcal{N}_i} \eta_j(l))\right]=x_i(k+1)-x_i(0).
\end{aligned}
\end{equation*}
Hence, we have 
\begin{equation}\label{proofSDCC1}
\mathbb{E} \left[ \lim_{l \to \infty }\sum_{j \in \mathcal{V}/ \{i\}} x_j(l)\right]=\sum_{j \in \WZ{\mathcal{V}/\{i\}}} x_j(0).
\end{equation}
\WZ{Case 2}: Consider that the errors of faulty node $i$ which occur in a period from $k_i^0$ to $k_i^1$. The compensation for node $i$ is $ -\sum_{j \in \mathcal{N}_i} (k_i^{j0}-k_i^{j1}) \overline{\varepsilon}_{i}^{j}$, where $k_i^{j0}$ is the last time of detection before $k_i^0$ and $k_i^{j1}$ is the last time of detection before $k_i^1$. Consequently, the following holds: 
\begin{equation*}
\mathbb{E}[ k_i^0-k_i^{j0}]=\mathbb{E}[k_i^1-k_i^{j1} ]=\frac{1}{p}-1.
\end{equation*}
Hence, we have $\mathbb{E} [k_i^{j0}-k_i^{j1}]=k_i^{0}-k_i^{1}$.
Since detection and errors are independent, it holds that
\begin{equation*}
\mathbb{E}\left [-\sum_{j \in \mathcal{N}_i} (k_i^{j0}-k_i^{j1}) \overline{\varepsilon}_{i}^{j} \right]=(k_i^1-k_i^0)\theta_i\mu_i. 
\end{equation*}
Then, we have 
\begin{equation}\label{proofSDCC2}
\mathbb{E} \left[ \lim_{l \to \infty }\sum_{j \in \mathcal{V}} x_j(l) \right] =  \sum_{j \in \WZ{\mathcal{V}}} x_j(0).
\end{equation}
\WZ{With the above two cases, we have (\ref{sdcapr1}) for the general set $\mathcal{V}_r$.}
Hence, (\ref{th2.1}) holds and an average consensus in expectation \WZ{among the nodes in the set after isolation $\mathcal{V}_r$} is achieved. 


Since $\overline{\varepsilon}_{i}$ is the average value of sampling, we have $\mathbb{D} [ \overline{\varepsilon}_{i}] \le \sigma_{\varepsilon_i}^2/{M_i}$.
Because the detected errors ($m_j$ times) will be compensated accurately by node $j$, the variance of the consensus value is given by
\begin{equation*}
\begin{aligned}
&\mathbb{D} \Big[\sum_{u\in \mathcal{V}_r} x_u(k)\Big]\\
=& \sum_{i\in \mathcal{V}_m}\mathbb{D}\Big[{\sum_{l=1}^{k^\text{iso}_i}\varepsilon_i(l)} -k^\text{iso}_i\sum_{j \in \mathcal{N}_i} \overline{\varepsilon}_{i}^{j} \Big] \\
+& \sum_{i \in \mathcal{V}_f} \mathbb{D} \Big[ {\sum_{l=k_i^0}^{k_i^1}}\varepsilon_i(l)-\sum_{j \in \mathcal{N}_i} (k_i^{j0}-k_i^{j1}) \overline{\varepsilon}_{i}^{j} \Big] \\
\le& \sum_{i\in \mathcal{V}_m} {{(k^\text{iso}_i-M_i)}\big(1+(k^\text{iso}_i-M_i)/M_i\big)\sigma_{\varepsilon_i}^2}\\
+&\sum_{i\in\mathcal{V}_f} \Big[\frac{1-p}{p^2}(\frac{\sigma_{\varepsilon_i}^2}{M_i}+\theta_i^2\mu_i^2)
\!+\!{{(k^{1}_i-k^{0}_i)}(1\!+\!\frac{k^{1}_i-k^{0}_i}{M_i})\sigma_{\varepsilon_i}^2}\Big]\\
=& \sum_{i\in \mathcal{V}_m} D_m + \sum_{i\in \mathcal{V}_f} D_f. 
\end{aligned}
\end{equation*}
Each malicious node causes \CR{adverse impacts}. Hence, (\ref{th2.2}) holds.

According to the proof of Theorem \ref{th1.1}, $\varepsilon(k)$ satisfies the condition $||\varepsilon(k)||_\infty \le \alpha \rho^k$.
For misbehaving nodes, we have
\begin{equation*}
\sum_{i\in \mathcal{V}_m}\sum_{l=1}^{\infty} \varepsilon_i(l) \le \frac {|\mathcal{V}_m|\alpha \rho} {(1-\rho)}.
\end{equation*}
Hence, (\ref{th2.3}) is proved.
\end{proof}
\begin{remark}
Theorem \ref{thm:sdca} ensures \WZ{resilient} average consensus in expectation.
Note that if $M_i=k^\textrm{iso}_i$, then $\mathbb{D} \big[\lim_{l\to \infty} x_j(l)\big]=0$, which corresponds to deterministic conditions by D-DCC.
It can be seen that misbehaving nodes can be detected in finite time in expectation.
Meanwhile, the expectation of undetected errors is the same as the mean detected errors, and for faulty nodes, compensation period $k_i^{j0}-k_i^{j0}$ has the same expectation as that of errors, i.e., $k_i^1-k_i^0$.
Hence, misbehavior can be compensated in expectation. 
The first part of variance $D_m$ is from undetected errors and compensation of them for both malicious nodes and faulty nodes. The other part $D_f$ is because of the difference between compensation period $k_i^{j0}-k_i^{j0}$ and error period from $k_i^1$ to $k_i^0$ for faulty nodes.

For normal nodes, the larger attack probability $\theta_i$ of neighboring \WZ{misbehaving} nodes and detection probability $p$ will decrease the expected steps for detection. On the one hand, a larger attack probability will improve the attack capability of malicious nodes. On the other hand, the detection probability based on the reliable link will be close to one. Hence, the performance of detection will be improved.
Although the variance may be large, it mostly depends on the undetected attack errors.
As for the bias between undetected errors and compensation,  unpredictable attacks with large variance may cause considerable bias to our algorithm, but it will be out of error bound easily.
Furthermore, it is not actually necessary for malicious nodes to manipulate the information set with a constant probability and a certain distribution. The detection method will be effective as long as the attack probability is larger than zero, and the errors may follow a certain attack method. Hence, to simplify the statement, we assume a constant attack probability and present the attack errors by a time-invariant probabilistic model.
\end{remark}

Next, we analyze the accuracy of mean-based Compensation Scheme \RNum{4}, i.e., the distance between the mean-based compensation and actual errors. 
The actual errors may consist of multiple uncertainties.
Let $F_{Y_i(k)}(x)$ be the cumulative distribution function (CDF) of $Y_i(k)$.
We adopt Gaussian mixture model (GMM) to represent the error variable $Y_i(k)$.
On the one hand, any distribution can be generally modeled by GMM with arbitrary precision. 
On the other hand, GMM has good operation properties.
GMM is defined as a convex combination of Gaussian distribution with different expectations $\mu_l$ and variances $\sigma_l$:
\begin{equation}
    F_{Y_i(k)}(x)=\sum_{l=1}^{N_l} a_l \Phi(\frac{x-\mu_l}{\sigma_l}), \sum_{l=1}^{N_l} a_l=1,\sum_{l=1}^{N_l} a_l\mu_l=\mu_i ,
\end{equation}
where $\Phi(\cdot) $ is the CDF of the standard normal distribution.
Since $\varepsilon_i(k) = X_i(k)Y_i(k)$, the CDF of $\varepsilon_i(k)$ is
\begin{equation} \label{distribution}
F_{\varepsilon_i(k)}(x)=
\begin{cases}
\theta_iF_{Y_i(k)}(x) & x<0\\
1-\theta_i+\theta_iF_{Y_i(k)}(x) & x\ge 0.
\end{cases}
\end{equation}
Without loss of generality, consider all the detection numbers $m_j(k), j\in\mathcal{N}_i $ are the same. 
When detecting time is large enough, according to the Central Limit Theorem~\cite{grimmett2003probability}, we have 
\begin{equation*}
\overline{\varepsilon}_{i} \sim \mathcal{N}(\theta_i\mu_i,\sigma_{\varepsilon_i}^2/M_i ),
\end{equation*}
\WZ{where $\mathcal{N}(\cdot,\cdot)$ is the Gaussian distribution.}
The CDF of $\overline{\varepsilon}_{i}$ is
\begin{equation*}
F_{\overline{\varepsilon}_{i}}(x)= \Phi(\frac{\sqrt{M_i}(x-\theta_i\mu_i)}{\sigma_{\varepsilon_i}})
\end{equation*}

The close proximity of error and compensation probability distributions will not only guarantee the close final consensus value, but also guarantee the stationarity of consensus process. 
The characteristic of proximity of two probability distributions can be described by the Wasserstein distance\cite{vallender1974calculation}. 
The Wasserstein distance $R(\mathcal{P},\mathcal{Q})$ between the two distributions $\mathcal{P}$ and $\mathcal{Q}$ is defined as follows:
\begin{equation*}
R(\mathcal{P},\mathcal{Q}) =\inf \mathbb{E} \big[\rho (\xi,\eta) \big],
\end{equation*}
where \CR{$\rho(\cdot,\cdot)$ is the function} of the metric space and the mathematical operation $\inf$ is taken over all possible pairs of random variables $\xi$ and $\eta$ with distributions $\mathcal{P}$ and $\mathcal{Q}$, respectively. 
In the case of one-dimensional space with the Euclidean metric, the Wasserstein distance is calculated by
\begin{equation*}
R(\mathcal{P},\mathcal{Q}) = \int_{-\infty}^{\infty} |F(x)-G(x)|\text{d}x,
\end{equation*}
where \WZ{$F(x)$ and $Q(x)$ are the CDF of $\mathcal{P}$ and $\mathcal{Q}$,} respectively\cite{vallender1974calculation}.
Hence, we have the Wasserstein distance between $\varepsilon_i(k)$ and $\overline{\varepsilon}_i$:
\begin{equation}
R(\varepsilon_i(k),\overline{\varepsilon}_i) = \int_{-\infty}^{\infty} |F_{\varepsilon_i(k)}(x)-F_{\overline{\varepsilon}_i}(x)|\text{d}x.
\end{equation}
We can use the Wasserstein distance to show the expectation of absolute error between the mean-based compensation and actual errors.
We provide the following theorem to illustrate the bound of $R(\varepsilon_i(k),\overline{\varepsilon}_i)$ when $Y_i(k)$ obeys the normal distribution.
\begin{table*}[htbp]
    \centering
    \caption{Comparisons of the proposed methods with other representative algorithms}
    \begin{tabular}{lcccccc}
    \hline
    \hspace{-2mm}
Method\! & Topology\!\! & \!\!Attack Model\!\! & Nodes & \!Information Needed \!& \!\!Consensus Value\!\!\\
\hline 
\hspace{-2mm}W-MSR\cite{6315661}\! & Directed\!\!  & \!\!Deception\!\! & Byzantine & \!Local Information\! & \!\!Convex Hull\!\!\\
\hspace{-2mm}SDA \& MDA\cite{8013830}\!\! & Undirected\!\! & \!\!Deception\!\! & Malicious & \!Two-hop Information\! & \!\!Convex Hull\!\!\\
\hspace{-2mm}Algorithm in \cite{yuan2021secure}\!\! & Directed\!\! & \!\!Deception\!\! & Malicious & \!Two-hop Information\! &  \!\! - \!\!\\
\hspace{-2mm}Algorithm in \cite{Pasqualetti201290}\!\! & Directed\!\! & \!\!Deception\!\! & Byzantine & \!Partial Global Information\! & \!\!- \!\!\\
\hspace{-2mm}D-DCC\! & Undirected\!\! & \!\!Deception\!\! & Malicious & \!Two-hop Information\! & \!\!Average Consensus\!\!\\
\hspace{-2mm}S-DCC\! & Undirected\!\! & \!\!\small{Deception \!\!\&\!\! Link Failure}\!\! & Malicious & \!Two-hop Information\! &\!\! Average in Expectation\!\!\\
\hline
\end{tabular}
\label{tab:compare}
\end{table*}
\begin{theorem}\label{aaa}
When $Y_i(k)$ is modeled by GMM, we have
\begin{equation}\label{wasserstein}
R(\varepsilon_i(k),\overline{\varepsilon}_i) \!\!\le\!\! {(1\! \!-\!\!\theta_i)\mathbb{E} [|Y_i|] \!\!+\!\!\! \sum_{l=1}^{N_l} \!a_l(|\theta_i\mu_i \!-\! \mu_l| \!\!+\!\!|\frac{\sigma_{\varepsilon_i}}{\sqrt{M_i} \!-\! \sigma_l}|)},
\end{equation}
where $\mathbb{E} [|Y_i|] \!\le\! \sum_{l=1}^{N_l}\! \{\sqrt{\frac{2}{\pi}}\sigma_l\exp{(\frac{-\mu_l^2}{2\sigma_l^2})} \!+\!\mu_l[1\!-\!2\Phi(\frac{-\mu_l}{\sigma_l})]\}$.
\end{theorem}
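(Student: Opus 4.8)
The plan is to use the one\,--\,dimensional formula $R(\mathcal{P},\mathcal{Q})=\int_{\mathbb{R}}|F(x)-G(x)|\,\mathrm{d}x$ recalled above and to split the distance by passing through the ``unmasked'' variable $Y_i(k)$ (the GMM without the Bernoulli factor $X_i(k)$). Since $|F_{\varepsilon_i(k)}-F_{\overline{\varepsilon}_i}|\le|F_{\varepsilon_i(k)}-F_{Y_i(k)}|+|F_{Y_i(k)}-F_{\overline{\varepsilon}_i}|$ pointwise, integration gives $R(\varepsilon_i(k),\overline{\varepsilon}_i)\le R(\varepsilon_i(k),Y_i(k))+R(Y_i(k),\overline{\varepsilon}_i)$, so it suffices to control the two pieces separately.

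For the first piece I would evaluate $R(\varepsilon_i(k),Y_i(k))$ exactly. From (\ref{distribution}), the difference $F_{\varepsilon_i(k)}(x)-F_{Y_i(k)}(x)$ equals $-(1-\theta_i)F_{Y_i(k)}(x)$ for $x<0$ and $(1-\theta_i)\bigl(1-F_{Y_i(k)}(x)\bigr)$ for $x\ge 0$; each has constant sign on its half\,--\,line, so
\[
R(\varepsilon_i(k),Y_i(k))=(1-\theta_i)\!\left[\int_{-\infty}^{0}\!F_{Y_i(k)}\,\mathrm{d}x+\int_{0}^{\infty}\!\bigl(1-F_{Y_i(k)}\bigr)\mathrm{d}x\right]=(1-\theta_i)\,\mathbb{E}[|Y_i|],
\]
the bracket being the standard tail identity for $\mathbb{E}[|Y_i|]$. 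For the auxiliary estimate, the law of $Y_i(k)$ is the mixture $\sum_{l}a_l\mathcal{N}(\mu_l,\sigma_l^{2})$, hence $\mathbb{E}[|Y_i|]=\sum_l a_l\,\mathbb{E}|W_l|$ with $W_l\sim\mathcal{N}(\mu_l,\sigma_l^{2})$, and $\mathbb{E}|W_l|$ is the folded\,--\,normal mean $\sigma_l\sqrt{2/\pi}\,\exp(-\mu_l^{2}/2\sigma_l^{2})+\mu_l[1-2\Phi(-\mu_l/\sigma_l)]$, which yields the displayed bound on $\mathbb{E}[|Y_i|]$.

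For the second piece I would rewrite $F_{\overline{\varepsilon}_i}(x)=\sum_l a_l\,\Phi\!\bigl(\sqrt{M_i}(x-\theta_i\mu_i)/\sigma_{\varepsilon_i}\bigr)$, which is legitimate because $\sum_l a_l=1$; then the integrand is $\bigl|\sum_l a_l\bigl[\Phi((x-\mu_l)/\sigma_l)-\Phi(\sqrt{M_i}(x-\theta_i\mu_i)/\sigma_{\varepsilon_i})\bigr]\bigr|\le\sum_l a_l|\cdots|$, and integrating term by term gives $R(Y_i(k),\overline{\varepsilon}_i)\le\sum_l a_l\,R\bigl(\mathcal{N}(\mu_l,\sigma_l^{2}),\mathcal{N}(\theta_i\mu_i,\sigma_{\varepsilon_i}^{2}/M_i)\bigr)$. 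Each Gaussian pair is then bounded through the monotone coupling $X\mapsto\theta_i\mu_i+\bigl(\sigma_{\varepsilon_i}/(\sqrt{M_i}\,\sigma_l)\bigr)(X-\mu_l)$, $X\sim\mathcal{N}(\mu_l,\sigma_l^{2})$, which gives $R\le|\mu_l-\theta_i\mu_i|+\bigl|1-\sigma_{\varepsilon_i}/(\sqrt{M_i}\sigma_l)\bigr|\,\mathbb{E}|X-\mu_l|=|\mu_l-\theta_i\mu_i|+\sqrt{2/\pi}\,\bigl|\sigma_l-\sigma_{\varepsilon_i}/\sqrt{M_i}\bigr|\le|\mu_l-\theta_i\mu_i|+\bigl|\sigma_l-\sigma_{\varepsilon_i}/\sqrt{M_i}\bigr|$. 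Summing over $l$ and combining with the first piece produces (\ref{wasserstein}).

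The steps needing care are purely bookkeeping: tracking the sign of $F_{\varepsilon_i(k)}-F_{Y_i(k)}$ on each half\,--\,line so that the absolute value drops out and the remaining integral is recognized as $\mathbb{E}[|Y_i|]$, and the ``fake mixture'' rewriting of $F_{\overline{\varepsilon}_i}$ that makes the componentwise triangle inequality applicable. The one genuine caveat, rather than a technical obstacle, is that $\overline{\varepsilon}_i\sim\mathcal{N}(\theta_i\mu_i,\sigma_{\varepsilon_i}^{2}/M_i)$ is only the Central Limit Theorem approximation valid for large $M_i$, so (\ref{wasserstein}) should be read in that asymptotic regime, and this hypothesis ought to be stated explicitly alongside the theorem.
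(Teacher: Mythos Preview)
Your proposal is correct and follows essentially the same route as the paper: triangulate through $Y_i(k)$, evaluate the first piece as $(1-\theta_i)\mathbb{E}[|Y_i|]$ via the CDF identity from (\ref{distribution}), and bound the second piece by the mixture/triangle decomposition together with the Gaussian--Gaussian $W_1$ estimate (which the paper simply cites rather than deriving via the coupling you wrote out). Your added remarks on the folded-normal mean and the CLT caveat for $\overline{\varepsilon}_i$ are valid refinements that the paper leaves implicit.
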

\begin{proof}
Referring to the absolute value inequality, we have
\begin{equation*}
\begin{aligned}
~&R(\varepsilon_i(k),\overline{\varepsilon}_i) = \int_{-\infty}^{\infty} |F_{\varepsilon_i(k)}(x)-F_{\overline{\varepsilon}_i}(x)|\text{d}x\\
~&\le\!\! \int_{-\infty}^{\infty}\!\! |F_{\varepsilon_i(k)}(x)-F_{Y_i}(x)|\text{d}x 
+\!\! \int_{-\infty}^{\infty} |F_{\overline{\varepsilon}_i}(x)-F_{Y_i}(x)|\text{d}x.
\end{aligned}
\end{equation*}
According to (\ref{distribution}), we have 
\begin{equation*}
\begin{aligned}
~&\!\int_{-\infty}^{\infty} |F_{\varepsilon_i(k)}(x)-F_{Y_i}(x)|\text{d}x \\
~&\!\!\!\!=\!\!\int_{-\infty}^{0}|(1-\theta_i)F_{Y_i}(x)|\text{d}x+\!\! \int_{0}^{\infty}|(1-\theta_i)(1-F_{Y_i}(x))|\text{d}x\\\!
~&\!\!\!\!=(1-\theta_i)\left[ \int_{-\infty}^{0}F_{Y_i}(x)\text{d}x+\int_{0}^{\infty}(1-F_{Y_i}(x))\text{d}x \right]\\
~&\!\!\!\! =(1-\theta_i)\mathbb{E} [|Y_i|].
\end{aligned}
\end{equation*}

In addition, according to the Wasserstein distance between two normal distributions~\cite{chafai2010fine}, we have
\begin{equation*}
\begin{aligned}
~&\int_{-\infty}^{\infty} |F_{\overline{\varepsilon}_i}(x)-F_{Y_i}(x)|\text{d}x \\
~&\le
\sum_{l=1}^{N_l} a_l \left[
\int_{-\infty}^{\infty}|F_{\overline{\varepsilon}_i}(x)-\Phi(\frac{x-\mu_l}{\sigma_l})|\text{d}x
\right]\\
~&\le \sum_{l=1}^{N_l}  a_l(|\theta_i\mu_i-\mu_l|+|\sigma_{\varepsilon_i}/\sqrt{M_i}-\sigma_l|)
\end{aligned}
\end{equation*}
Combining the above, we complete the proof.
\end{proof}
\begin{remark}
Theorem \ref{aaa} shows the Wasserstein distance under GMM. Hence, the expectation of absolute error between the mean-based compensation and actual errors is bounded.
 Generally, the explicit bound is difficult to formulate under other distributions, but the Wasserstein distance is bounded as long as $\mathbb{E}[|Y_i|]$ and $\mathbb{E}[|\overline{\varepsilon}_i|]$ exist.
The quantitative evaluations can be found in Sec.~\ref{SDCA5b}.
\end{remark}
\begin{remark}
\CR{Here, we provide a comprehensive comparison of our D-DCC and S-DCC algorithms with other resilient consensus algorithms, which are summarized in Table \ref{tab:compare}.}
\end{remark}

\section{Numerical Evaluations}
In this section, we conduct numerical evaluations to illustrate the performance of D-DCC and S-DCC. 
Consider a Erdös-Rényi Random graph (where the probability for edge creation is $0.7$) with $N=10$ nodes. The system updates states by (\ref{e2}), where $W$ is designed by Perron weights.
All nodes' \CR{initial} states are selected from the interval $[0,2]$ randomly.
In the network, there are two misbehaving nodes that are not neighbors, i.e., malicious node $1$ that intends to break average consensus and faulty node $5$.
We set $\alpha_i=5$, $\rho_i=0.9$, $\forall i\in \mathcal{V}$.

\subsection{Performance of D-DCC}
At this part, we set the \CR{adverse impacts} of nodes $1$ and $5$ to satisfy $\varepsilon_1(k)=0.5 \cos (k)$ and $\varepsilon_5(k)= 0.5\times0.6^k$, \CR{respectively}.
Fig.~\ref{fig1.a} shows that all nodes except node $1$ achieves consensus. The consensus value is the average value of initial states of remaining nodes showing as the blue dotted line. 
As a contract, we plot the average state of normal nodes by MSR algorithm~\cite{262588} as the green line.
 The exact \WZ{resilient} average consensus is achieved by D-DCC, but MSR algorithm does not guarantee average consensus.
Fig.~\ref{fig1.b} shows the errors of node $1$ and $5$. Node $5$ has not been isolated because its error is exponentially decaying and in the local bound.
Node \CR{$1$} is isolated at time $24$ because the error is out of the bound.

\begin{figure}[t]
\begin{center}
\includegraphics[height=4cm]{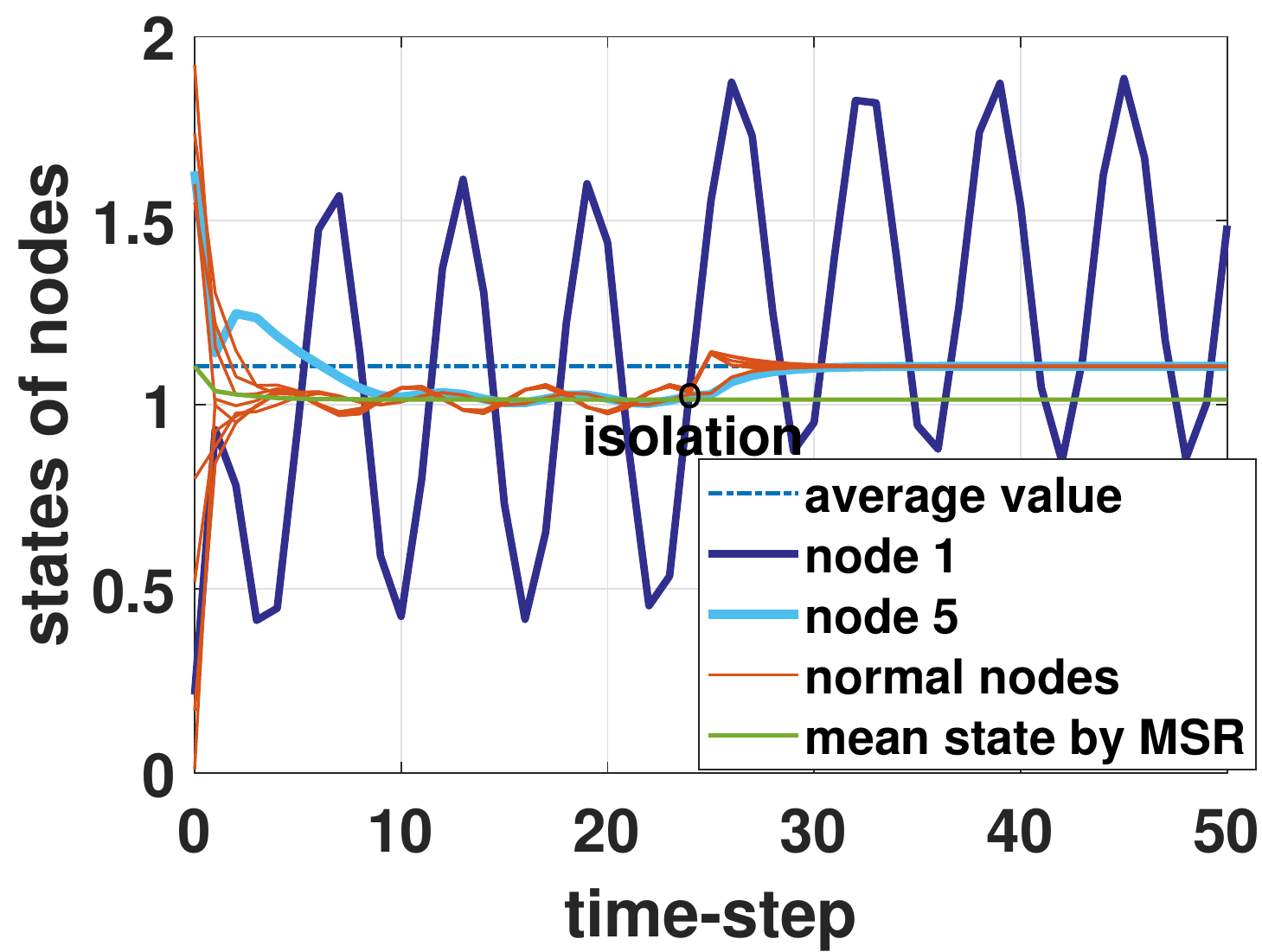}    
\caption{Performance of D-DCC: States of nodes.}  
\label{fig1.a}                                 
\end{center}                                 
\end{figure}

\begin{figure}[t]
\begin{center}
\includegraphics[height=4cm]{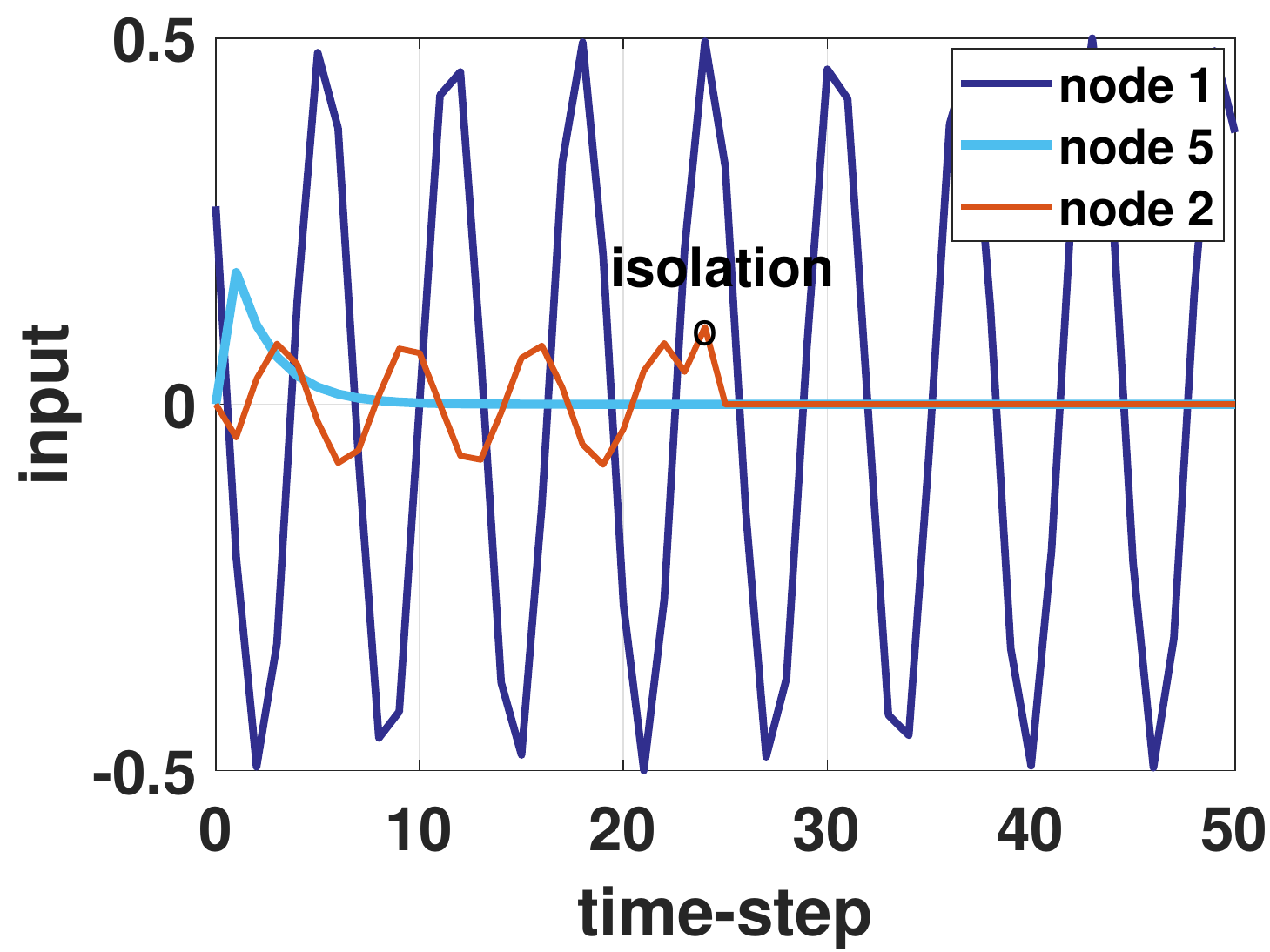}    
\caption{Performance of D-DCC: Errors of misbehaving nodes ($1$ and $5$) and compensation input of node $2$.}  
\label{fig1.b}                                 
\end{center}                                 
\end{figure}

\subsection{Performance of S-DCC}\label{SDCA5b}
The error of malicious node $1$ is set to obey a GMM \WZ{when attack is adopted}, i.e.,  $ F_{Y_1(k)}(x)= 0.5 \Phi(\frac{x-0.05}{\sqrt{0.05}}) + 0.5 \Phi(\frac{x-0.15}{\sqrt{0.2}})$.  The error of faulty node $5$ follows a normal distribution. We set the connection probability $p=0.8$ and attack probability $\theta_1=0.8$.

Fig.~\ref{fig2.a} shows that all nodes except node $1$ achieve consensus, and node $1$ is isolated by other nodes but node $5$ is not.
Node $1$ causes errors continuously, and it is isolated at time $36$ when the error is out of the error bound.
Node $5$ only makes misbehavior in first $10$ steps, and the errors are within the fault-tolerance bound. Hence, node $5$ is not isolated, and its errors are compensated by neighbors. 
The consensus value is close to the average value of initial states of remaining nodes. 
Though the limit value is the exact average consensus in expectation, in practice it may vary from it. 
Compared with MSR algorithm~\cite{262588}, S-DCC achieves more accurate average consensus.
Fig.~\ref{fig2.b} shows the compensation results of node $2$ and the errors of nodes $1$ and $5$.

According to the Wasserstein distance, we have $R(\varepsilon_1(k),\overline{\varepsilon}_1) = 0.1245$, which is much smaller than the bound presented in Theorem $3$.
The CDF of  $\varepsilon_1(k)$ and $\overline{\varepsilon}_1$ are shown in Fig. 3.
 The expectation of absolute error between the \CR{injected adverse impact} and the mean-based compensation at each time is within a small range.
The two distributions are similar, which shows the accuracy of mean-based compensation. 

In the same scenario, we repeat the test for 1000 times with the same initial states of all nodes to validate the correctness. 
\WZ{The average initial states of nodes $\{2,\ldots,10\}$ is 1.1517. 
The results are shown in Table \ref{tab2}.}
It can be observed that the average consensus value of S-DCC is closer to average initial states than that of MSR algorithm.
The variance of consensus value by S-DCC is $0.0166$, which is much smaller than the bound. Because the upper bound of variance is relaxed. With all above results, the effectiveness of our proposed D-DCC and S-DCC is illustrated.


\begin{figure}[t]
\begin{center}
\includegraphics[height=4cm]{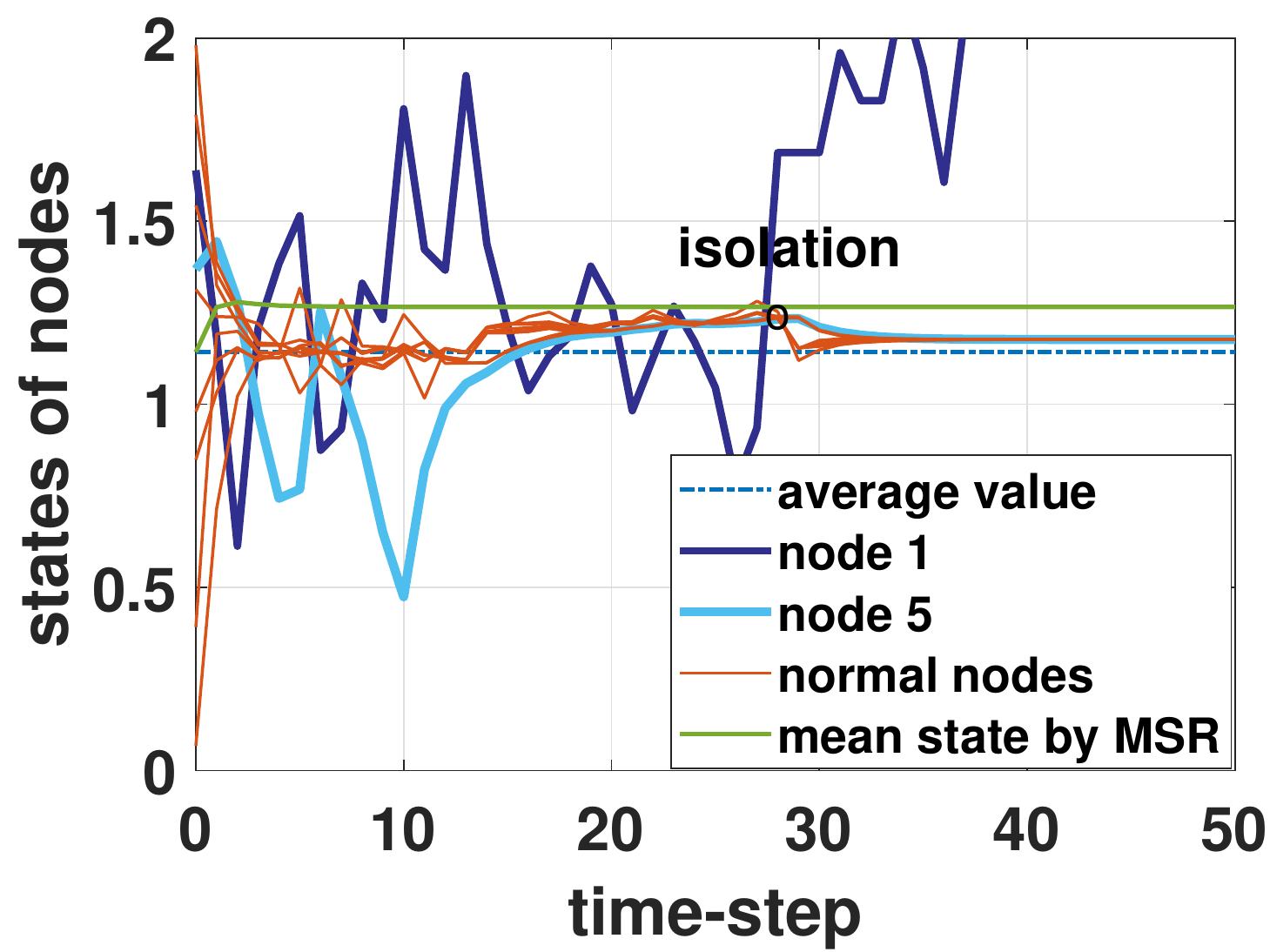}    
\caption{Performance of S-DCC: States of nodes.}  
\label{fig2.a}                                 
\end{center}                                 
\end{figure}

\begin{figure}[htbp]
\begin{center}
\includegraphics[height=4cm]{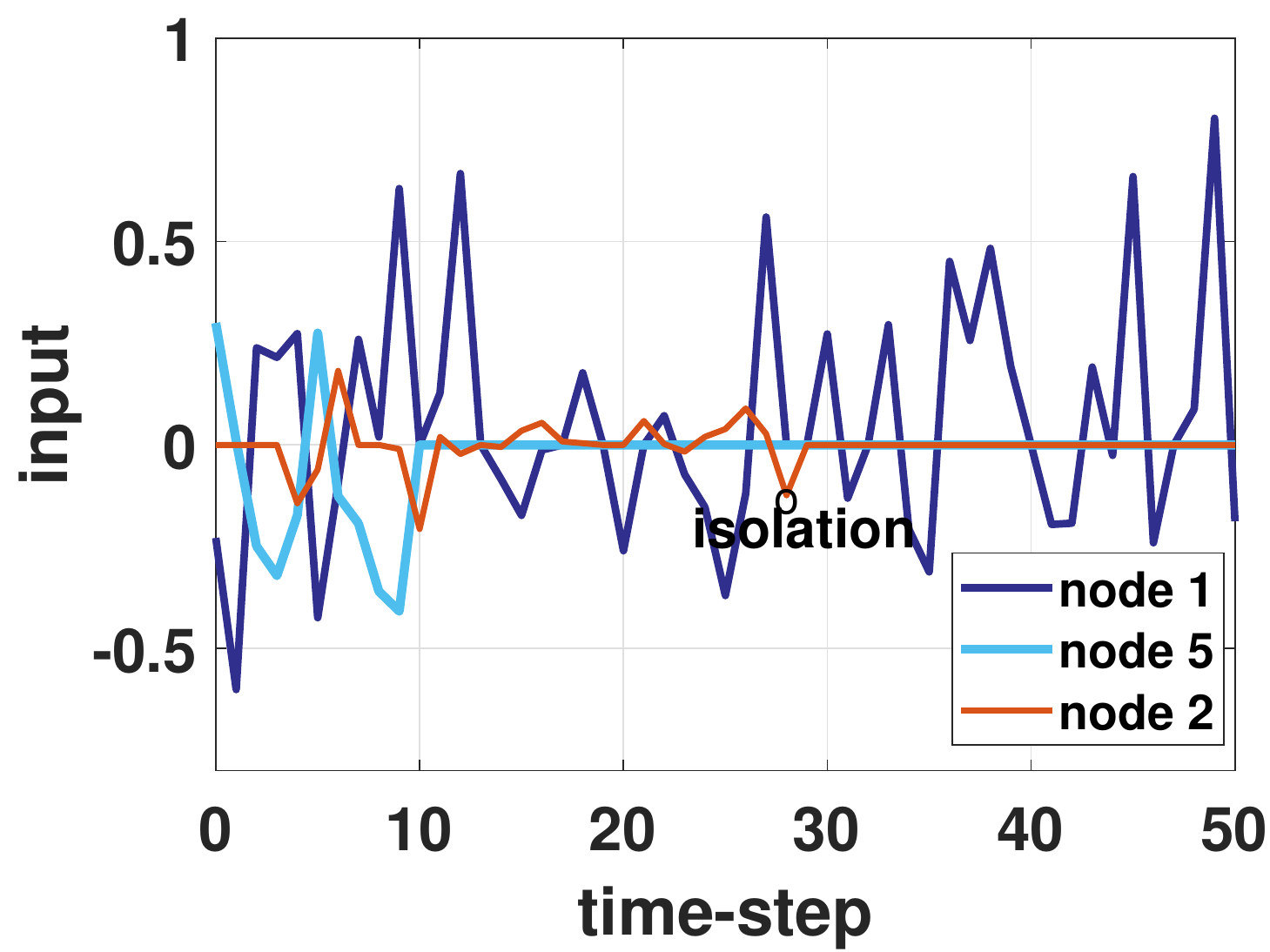}    
\caption{Performance of S-DCC: Errors of misbehaving nodes ($1$ and $5$) and compensation input of node $2$.}  
\label{fig2.b}                                 
\end{center}                                 
\end{figure}

\begin{figure}[htbp]
\begin{center}
\label{fig4}
\centering\includegraphics[width=0.6\columnwidth]{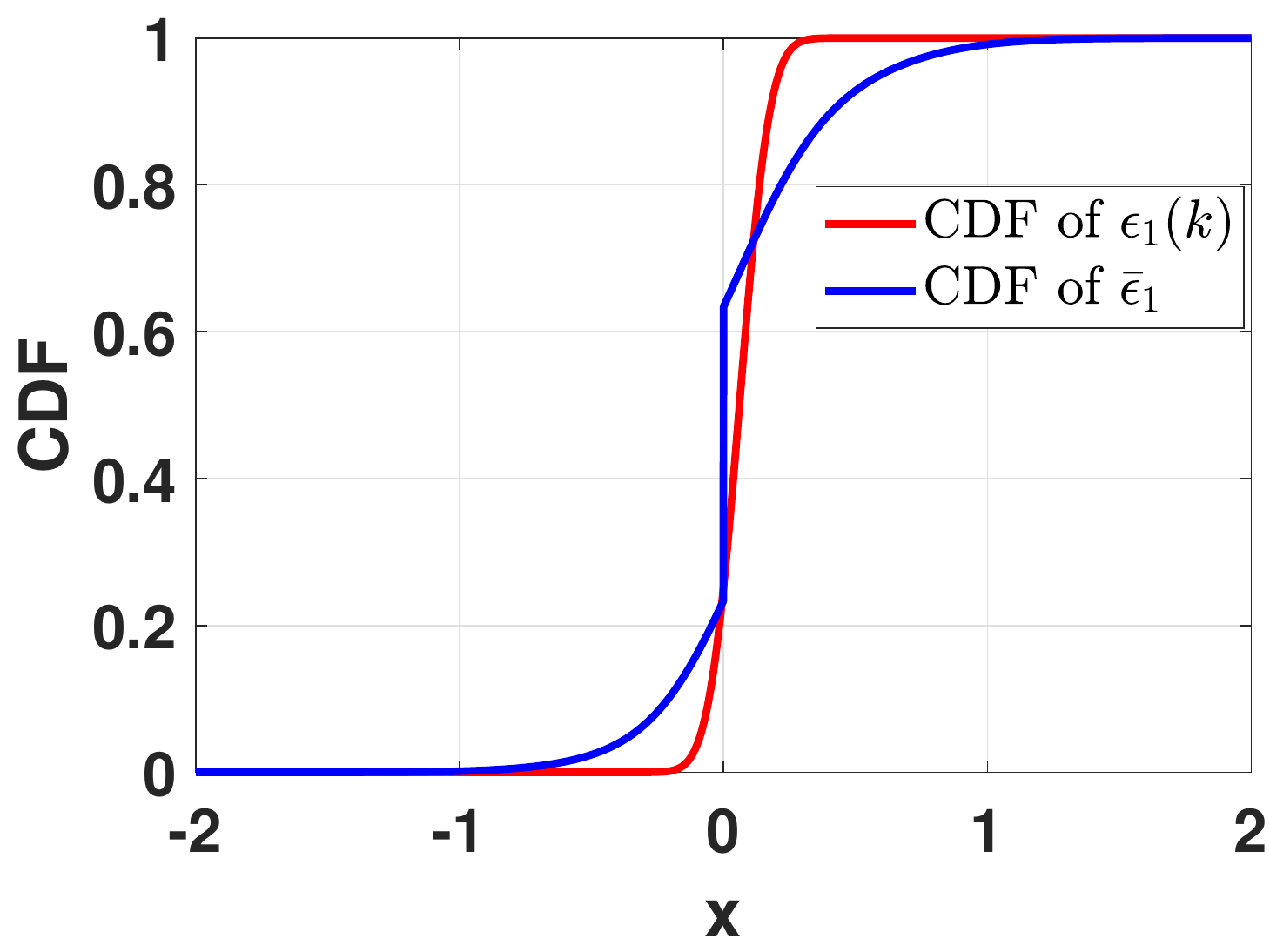}
\caption{ CDF of  $\varepsilon_1(k)$ and $\overline{\varepsilon}_1$}
\end{center}
\end{figure}

\begin{table}[t]
    \centering
    \caption{Comparisons of the proposed methods with MSR algorithm}
    \begin{tabular}{cc}
    \hline
    Methods &  Mean Consensus Value (1000 times)\\
    \hline
    D-DCC & 1.1517\\
    S-DCC & 1.1330 \\
    MSR algorithm & 1.0841\\
    \hline
    \end{tabular}
    \label{tab2}
\end{table}
\section{CONCLUSION}

In this paper, we have investigated the resilient average consensus problem against misbehaving nodes in multi-agent systems. 
We have first presented the D-DCC algorithm to compensate the \CR{adverse impacts} caused by misbehaving nodes while considering reliable communication.
The exponential decaying bound provides fault tolerance for misbehaving nodes and guarantee the convergence.
We have proved that the \WZ{resilient} average consensus can be achieved by D-DCC.
Furthermore, we have proposed S-DCC algorithm with mean-based compensation to adapt for scenarios where link failures may occur.
It \WZ{has been} proved that the  \WZ{resilient} average consensus in expectation is achieved by S-DCC, and the absolute error between mean-based compensation and actual \CR{adverse impact} has been analyzed by the Wasserstein distance.
Finally, simulations \WZ{have been} conducted to illustrate the effectiveness of the proposed algorithms.

There are still many issues worthy of further investigations. 
First, achieving exact resilient consensus over time-varying and directed networks will be considered in future.
Second,  the extension of resilient average consensus for high-dimension systems with general linear dynamics is left for future work. 
Third, applications of resilient average consensus including formation control and flocking of multi-robot systems can be possible directions.

\bibliographystyle{unsrt}        
\bibliography{autosam}           



\end{document}